\newtheorem{definition}{Definition}[section] 
\newtheorem{theorem}{Theorem}[section]
\title{Role Play: Learning Adaptive Role-Specific Strategies in Multi-Agent Interactions}
\author{
    Weifan Long\textsuperscript{1}, Wen Wen\textsuperscript{2}, Peng Zhai\textsuperscript{1,3}\thanks{Corresponding Author}, Lihua Zhang\textsuperscript{1,3}\footnotemark[1] \\
  \texttt{wflong21@m.fudan.edu.cn, w\_wen1113@tju.edu.cn},\\
  \texttt{pzhai,lihuazhang@fudan.edu.cn} \\
  \\
  \textsuperscript{1}Academy for Engineering and Technology, Fudan University, Shanghai 200433, China
  \\
  \textsuperscript{2}School of New Media and Communication, Tianjin University, Tianjin 300072, China
  \\
  \textsuperscript{3}Ji Hua Laboratory, Foshan 528251, China
}
\begin{document}

\maketitle

\begin{abstract}
    Zero-shot coordination problem in multi-agent reinforcement learning (MARL), which requires agents to adapt to unseen agents, has attracted increasing attention. Traditional approaches often rely on the Self-Play (SP) framework to generate a diverse set of policies in a policy pool, which serves to improve the generalization capability of the final agent. However, these frameworks may struggle to capture the full spectrum of potential strategies, especially in real-world scenarios that demand agents balance cooperation with competition. In such settings, agents need strategies that can adapt to varying and often conflicting goals.
    Drawing inspiration from Social Value Orientation (SVO)—where individuals maintain stable value orientations during interactions with others—we propose a novel framework called \emph{Role Play} (RP). RP employs role embeddings to transform the challenge of policy diversity into a more manageable diversity of roles. It trains a common policy with role embeddings observation and employ a role predictor to estimate the joint role embeddings of other agents, helping the learning agent adapt to its assigned role. We theoretically prove that an approximate optimal policy can be achieved by optimizing the expected cumulative reward relative to an approximate role-based policy. Experimental results in both cooperative (Overcooked) and mixed-motive games (Harvest, CleanUp) reveal that RP consistently outperforms strong baselines when interacting with unseen agents, highlighting its robustness and adaptability in complex environments.
\end{abstract}

\section{Introduction}
\label{sec1}
Artificial Intelligence (AI) has achieved remarkable success in mastering a wide range of strategic and competitive games, as demonstrated by notable research in this area \citep{silver2018general, vinyals2019grandmaster, berner2019dota}. Significant advancements have also been made in cooperative settings, where agents are trained to collaborate with either humans or other agents to achieve common goals \citep{carroll2019utility, zhao2023maximum}.
However, in real-world applications such as autonomous driving, interactions among agents often display mixed motives, combining elements of both cooperation and competition \citep{schwarting2019social}. In these mixed-motive environments, agents face complex interactions where each participant has distinct objectives. For example, in a public goods game, agents must carefully balance the benefits of contributing to a collective resource against the costs of their individual contributions \citep{gibbons1992primer}. These environments present significant challenges, requiring agents to develop sophisticated adaptation strategies to effectively interact with others who have varying incentives.

Zero-shot coordination is well-recognized in the context of multi-agent reinforcement learning (MARL), particularly for agents that need to interact effectively with new partners they have not encountered during training \citep{hu2020other}.
Self play (SP) is an effective framework for this challenge \citep{anyplay, lupu2021trajectory, zhao2023maximum}. The SP-based frameworks typically build a policy pool through SP, utilized to enhance the generalization capabilities of the final agent. Various techniques aim to increase the diversity of this policy pool to improve the agent's ability to generalize across different scenarios \citep{garnelo2021pick, zhao2023maximum}. In mixed-motive games, there is a greater need for policy diversity to adapt to the varying and often conflicting goals resulting from the imperfect alignment of incentives among group members.
However, the policy pool, primarily composed of past iterations of policies from a given population, captures only a limited range of the policy space. This limitation can prevent agents from effectively managing novel situations or policies not previously encountered in the training set.

Unlike existing policy pool based works, our approach try to develop a general model which can generate policies with different value orientation. 
Given the inherent challenges in representing policies directly due to their complexity, we propose projecting the policy space into a more compact dimension. 
Inspired by Social Value Orientation (SVO) \citep{svo}, in which individuals maintain stable value orientations (roles) in interactions with others, we proposed \emph{Role Play} (RP), which compress the vast MARL policy space into a more manageable ``human role space.''
This simplification aims to improve both the interpretability and efficiency of agent interactions.
Furthermore, drawing on social intuition \citep{lieberman2000intuition, jellema2024social} that humans estimate the behaviors of others during interactions to make better decisions, we introduce a role predictor to estimate the joint role embeddings of other agents, aiding the learning agent in adapting to its assigned role. This setup enables agents to learn and adapt to their assigned roles more effectively, enhancing their performance across various interactive scenarios.

In this work, we introduce a novel framework, \emph{Role Play} (RP), specifically designed to address the zero-shot coordination problem in multi-agent interactions. Our approach is distinguished by several key innovations:
\begin{itemize}
    \item \textbf{Role Embedding:} We utilize a sophisticated reward mapping function to project the extensive policy space into a more manageable role embedding space. This transformation facilitates  structured and strategic navigation through the complex landscape of agent behaviors. We theoretically prove that an approximate optimal policy can be obtained by optimizing the expected cumulative reward with respect to an approximate role-based policy. 
    \item \textbf{Role Predictor:} Inspired by social intuition, we have developed a role predictor that estimates the joint role embeddings of other agents. This module enhances the agent's ability to accurately predict and adapt to the role-based policies of other agents, enabling the learning agent to adapt more effectively to its assigned role.
    \item \textbf{Meta-task Learning:} We employ meta-learning techniques to model agent interactions as meta-tasks, which allows the learning agent to extrapolate from limited experiences to new, unseen scenarios. This approach significantly improve the adaptability of the learning agent to different roles and strategies.
\end{itemize}
These innovations collectively enhance the capability of agents to adapt and perform in complex multi-agent environments, establishing RP as a robust solution to zero-shot coordination challenges in MARL. To gain a deeper understanding of our framework and explore additional visualizations, we invite readers to visit our project website, where more detailed results are provided\footnote{\url{https://weifan408.github.io/role\_play\_web/}}.

\section{Related works}
A significant body of research focuses on enhancing the zero-shot coordination of MARL agents when interacting with unfamiliar partners \citep{kirk2021survey}. Most existing methods that aim to improve generalization across diverse agent interactions utilize the SP framework \citep{lanctot2017unified, bai2020near}, typically employing a policy pool to train agents in adapting to varied strategies. FCP \citep{strouse2021collaborating} introduces a two-stage framework that initially generates a pool of self-play policies and their prior versions, followed by training an adaptive policy against this pool. Additionally, some approaches improves the diversity of within the policy pool to cultivate more robust adaptive policies \citep{garnelo2021pick, zhao2023maximum}. TrajeDi \citep{lupu2021trajectory} applies the Jensen-Shannon Divergence across policies to foster the training of diverse strategies. AnyPlay \citep{anyplay} introduces an auxiliary loss function and intrinsic rewards to aid self-play-based agents in generalizing to more complex scenarios of both intra-algorithm and inter-algorithm cross-play. Furthermore, BRDiv \citep{rahman2023generating} assesses diversity based on the compatibility of teammate policies in terms of returns. HSP \citep{yu2023learning} employs event-based reward shaping to enhance the diversity of the policy pool. While these SP-based methods have achieved success, the reliance on a policy pool can be limiting. The policy pool, primarily composed of past iterations of policies from a given population, captures only a limited range of the policy space. This limitation can prevent agents from effectively managing novel situations or policies not previously encountered during training. Our approach diverges from the conventional use of a policy pool. Instead, we aim to learn a versatile policy capable of generating a spectrum of behaviors through role embeddings. This allows for broader adaptation across various strategic scenarios, enhancing the overall diversity and robustness of policies.

Our approach enhances the learning agent's adaptability by predicting the roles of other agents. Opponent modeling is traditional method which characterizing the behaviors, goals, or beliefs of opponents, enabling agents to adjust their policies to effectively adapt to various opponents. ToMnet \citep{rabinowitz2018machine} aims to equip agents with a theory of mind analogous to that of humans. Similarly, SOM \citep{raileanu2018modeling} adopts a unique approach to theory of mind by utilizing the agent's own policy to infer the opponent's goals. LOLA \citep{foerster2017learning} explicitly considers how an agent's policy might influence future parameter updates of its opponent, effectively anticipating and shaping the opponent's learning trajectory. Meta-MAPG \citep{kim2021policy} integrates meta-learning to develop policies that can adapt to the learning process of opponents. Additionally, M-FOS \citep{lu2022model} employs generic model-free optimization methods to learn meta-policies that are effective in long-horizon opponent shaping. While these methods focus on the learning dynamics of opponents, they tend to pay less attention to enhancing the generalization capabilities across diverse agent interactions. In contrast, our approach emphasizes the role-based adaptation, enabling agents to effectively engage with a wide range of roles and strategies without relying on detailed models of opponent learning dynamics.

We introduce meta-learning techniques to enable our policy leverage prior experiences across different roles, allowing it to quickly adapt to new tasks with minimal data or training \citep{beck2023survey}. There are two canonical meta reinforcement learning algorithms that have been widely adopted in the literature \citep{song2019maml, tack2022meta}, Model-Agnostic Meta-Learning (MAML) \citep{finn2017model} which uses meta-gradients, and Fast RL via Slow RL (RL$^2$) \citep{duan2016rl2}, which uses which uses a history-dependent policy. MAML enables models to quickly adapt to new tasks with only a few gradient updates by optimizing the model's parameters such that they can be fine-tuned efficiently for new tasks, without requiring major adjustments to the model architecture itself. RL$^2$ uses a slower, more deliberate reinforcement learning (RL) process to guide and optimize a faster learning agent; the slow RL agent captures long-term strategies and knowledge, which are then leveraged by the fast RL agent to quickly adapt to new tasks or environments. These methods have been successfully applied in various scenarios \citep{yu2020meta,mitchell2021offline}. In our work, we introduce meta-learning techniques to model the interactions between agents as meta-tasks, enabling the learning agent to generalize from a limited set of experiences to new, unseen agents and roles. This approach enhances the adaptability of the learning agent to different roles and strategies, enabling it to effectively navigate complex multi-agent environments. 

\section{Preliminary}
\label{sec3}
In this work, we study a role-based Markov game, which is represented by $\bm{\mathcal{M}} = \left\langle m, \bm{\mathcal{S}}, \bm{\mathcal{A}}, \bm{\mathcal{P}}, \bm{\mathcal{R}}, \bm{\mathcal{O}}, \psi, \bm{z}, \gamma \right\rangle$. Here, $m$ denotes the number of agents, and $\bm{\mathcal{S}}$ represents a finite set of states.
The joint action space is given by $\bm{\mathcal{A}} = \prod_{i=1}^m A^i$, where $A^i$ is the action space for agent $i$. The transition function $\bm{\mathcal{P}}: \mathcal{S} \times \mathcal{A} \times \mathcal{S} \rightarrow [0, 1]$ defines the probability of transitioning from one state to another given a joint action.
Each agent $i$ has a specific reward function $\mathcal{R}^i: \mathcal{S} \times \mathcal{A} \rightarrow \mathbb{R}$, and the collection of these functions forms the joint reward function $\bm{\mathcal{R}} = \{ \mathcal{R}^i \mid i=1,\ldots,m \}$. The joint observation function, $\mathcal{O} = \{\mathcal{O}^i \mid i=1,\ldots,m \}$, where each $\mathcal{O}^i$ provides observations for agent $i$.
We introduce a joint role embedding is represented as $\bm{z} = \{ z^i \mid i=1, \ldots, n \}$, where each $z^i \in \mathbb{Z}$ is the role embedding for agent $i$, and $\mathbb{Z}$ is the role embedding space. 
The reward feature mapping function $\psi(r^i, z^i)$ processes the reward $r^i$ and the role embedding $z^i$ to generate a new reward value for agent $i$.
The discount factor is $\gamma \in [0, 1]$. Each agent $i$ operates under the policy $\pi(a^i \mid o^i)$, where $a^i$ is the action taken given the observation $o^i$. The joint policy of agent $i$ with role embedding $z^i$ is defined as $\pi(z^i)$, and the joint policy of all agents except agent $i$ is denoted as $\bm{\pi}^{-i}(\bm{z}^{-i})=\prod_{j \neq i} \pi(z^j)$, where $\bm{z}^{-i} = \{ z^j \mid j \neq i \}$ is the joint role embedding of all agents except agent $i$. Similarly, the joint observation of all agents except agent $i$ is $\bm{o}^{-i} = \{ o^j \mid j \neq i \}$, and the joint action of all agents except agent $i$ is $\bm{a}^{-i} = \{ a^j \mid j \neq i \}$.

\begin{figure}[h]
    \centering
    \includegraphics[width=0.6\textwidth]{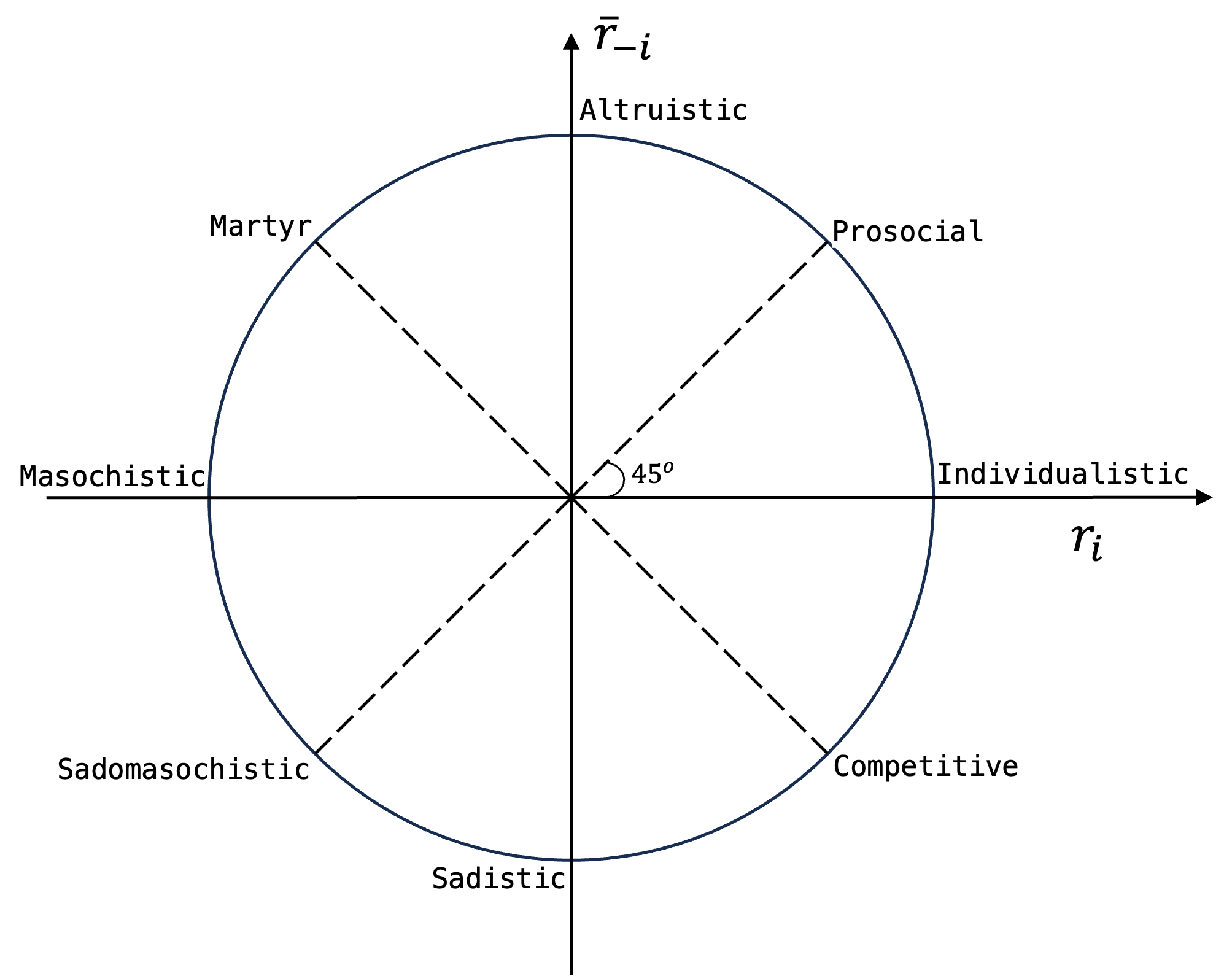} 
    \caption{A graphical representation of the SVO framework.}
    \label{fig_svo}
\end{figure}

We employ Social Value Orientation (SVO) \citep{murphy2014social} to categorize agent roles in mixed-motive games. SVO is a psychological construct that captures individual differences in the value orientations during social interactions. It classifies individuals based on their preferences for outcomes. The ring formulation of SVO is illustrated in Fig. \ref{fig_svo}. The SVO angle of agent $i$, denoted as $\theta^i$, is computed based on the agent's own reward and the average reward of other agents:
\begin{equation}
    \theta^i = arctan\left( \frac{r^i}{\bar{r}^{-i}} \right),
\end{equation}
where $r_i$ is the reward of agent $i$ and $\bar{r}^{-i}$ is the average reward of other agents. In general, SVO can be categorized into eight categorizes: \textit{Masochistic}, \textit{Sadomasochistic}, \textit{Sadistic}, \textit{Competitive}, \textit{Individualistic}, \textit{Prosocial}, \textit{Altruistic} and \textit{Martyr}, corresponding to $\theta^i = \frac{k\pi}{4}$ for $k = -4, \dots, 3$. We use the variant of SVO for reward shaping \citep{peng2021learning}, which defines the SVO-shaped reward for agent $i$ as:
\begin{equation}
    r_i^\prime = cos(\theta^i) \cdot r_i + sin(\theta^i) \cdot \bar{r}^{-i},
\end{equation}
where $r_i^\prime$ is SVO-shaped reward for agent $i$. When $\theta^i = 0$ (\textit{Individualistic}), the reward is the same as the original reward $r^i$. When $\theta^i = \frac{\pi}{4}$ (\textit{Prosocial}), the reward incorporates both the agent's reward and the average reward of others, effectively considering collective outcomes. When $\theta^i = \frac{\pi}{2}$ (\textit{Altruistic}), the reward focuses solely on the average reward of the other agents.

\section{Method}
Role play (RP) is designed to train a single policy which can adapt all roles in role space, enabling agents to dynamically adapt their strategies based on their assigned roles.
Different with existing policy-pool-based methods, RP transforms the policy diversity challenges into more manageable diversity of role representations. The overall framework is illustrated in Fig. \ref{fig_1}.

\begin{figure}[h]
    \centering
    \includegraphics[width=1\textwidth]{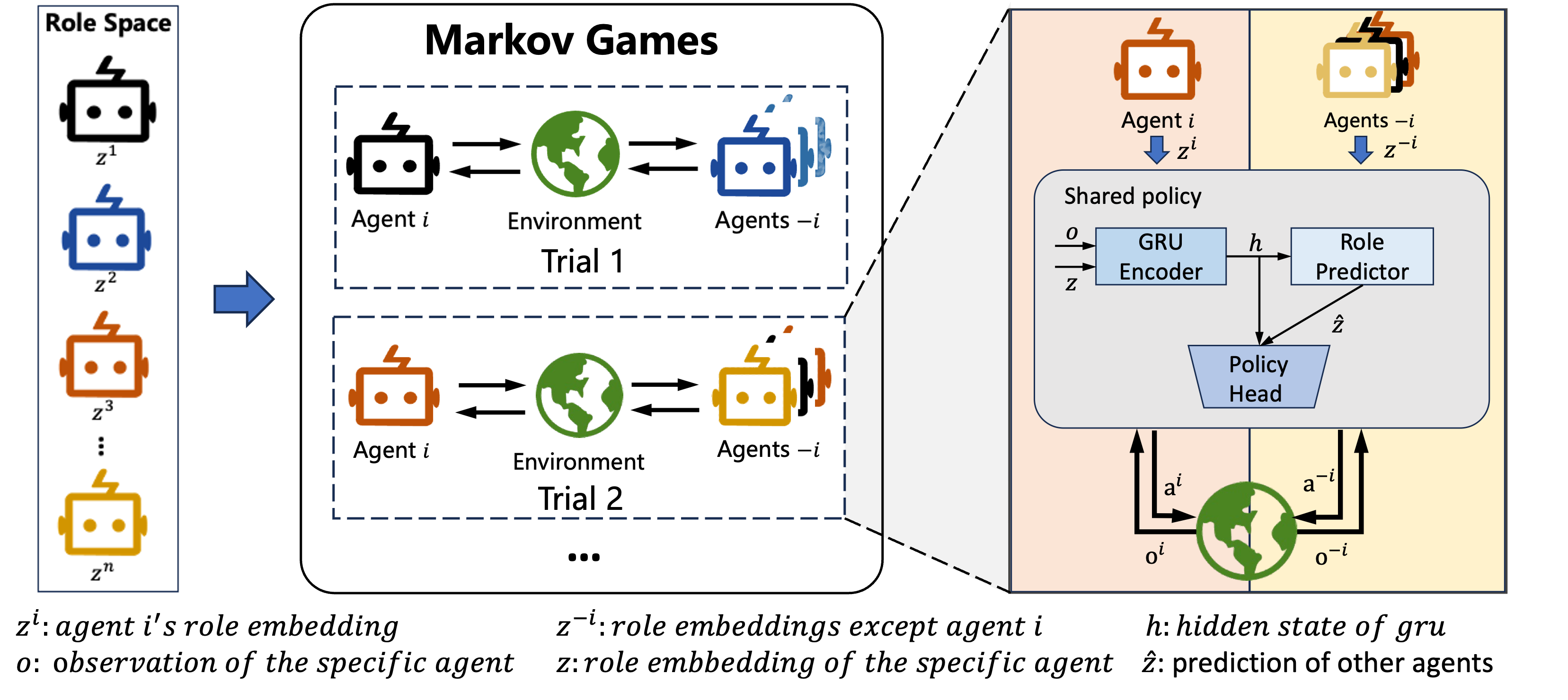} 
    \caption{\textbf{Role play framework}: Different agents share the same policy network but with different role embedding inputs. The role policy prediction network is trained to predict the role embeddings of other agents. The learning agent optimizes its policy based on the predicted roles.}
    \label{fig_1}
\end{figure}

\subsection{Role embedding}
In RP framework, each agent $i$ is assigned a randomly sampled role embedding $z^i$ in every trial and learns to effectively adapt to and play this role. The challenge lies in defining the role space and enabling agents to adapt efficiently to their assigned roles.
To address this, we employ a reward shaping method and introduce a reward feature mapping function $\psi(r^i, z^i)$ that processes the reward $r^i$ and the role embedding $z^i$ to generate a new reward value for agent $i$. This function $\psi(r^i, z^i)$ is specifically designed to capture the role-specific reward information, which is crucial for the agent to adapt to its role.
During training, agents interact with a variety of other randomly sampled roles, optimizing their strategies based on these interactions. Consequently, if the role space is sufficiently diverse to adequately represent the entire policy space adequately, agents are equipped to effectively handle interactions with previously unseen agents.

Each agent $i$ takes action without communications, based on its observation $o^i$ and role embedding $z^i$, receiving individual rewards $r^i$. Its goal is to maximize its expected cumulative discount reward
\begin{equation}
J(\pi) =  \mathop{\mathbb{E}}_{z^i, \bm{z}^{-i} \sim \bm{z}} \left[ J(\pi(z^i), \bm{\pi}(\bm{z}^{-i})) \right],
\end{equation}
where $J(\pi(z^i), \bm{\pi}(\bm{z}^{-i}))$ denotes the expected sum reward achieved by agent $i$ using policy $\pi(z^i)$ while other agents use policies $\bm{\pi}^{-i}(\bm{z}^{-i})$. The detailed form of $J(\pi(z^i), \pi(\bm{z}^{-i}))$ is given by
\begin{equation}
J(\pi(z^i), \pi(\bm{z}^{-i})) =  \mathop{\mathbb{E}}_{\substack{s_{t+1} \sim \bm{\mathcal{P}}(\cdot|s_t, a^i_t, a^{-i}_t)\\ a^i_t \sim \pi^i(\cdot|o^i_t, z^i), \bm{a}^{-i}_t \sim \bm{\pi}^{-i}(\cdot|\bm{o}^{-i}, \bm{z}^{-i})}} \left[ \sum_{t=0}^{\infty} \gamma^t \psi\left(\bm{\mathcal{R}}( s_t,a^i_t,\bm{a}^{-i}_t), z^i \right) \right].
\end{equation}

Given the vast size of the policy space, it is challenging for the mapped role space $\mathbb{Z}$ to fully represent the entire policy space one-to-one.
Inevitably, this reduction results in some loss of information. To address this issue, we analyze the adaptability with respect to missing strategies using an $\epsilon$-close \citep{ko2006mathematical, zhao2023maximum} approximation.

\begin{definition} \citep{zhao2023maximum}
     \label{def_1}
     We define that a random policy $\pi^{\prime}$ is $\epsilon$-close to policy $\pi(z^\prime)$ at observation $o_t$ if 
     \begin{equation}
         \left | \frac{\pi^\prime(a_t^\prime|o_t^\prime)}{\pi(a_t^\prime|o_t^\prime, z^\prime)} - 1 \right | < \epsilon
     \end{equation}
     for all $a_t^\prime \in \mathcal{A}$. If this condition is satisfied at every $o_t^\prime \in \mathcal{O}$, we call $\pi^\prime$ is $\epsilon$-close to $\pi(z^\prime)$.
\end{definition}
Based on Definition \ref{def_1}, if the random policy $\pi^{\prime}$ is $\epsilon$-close to the role policy $\pi(z^\prime)$, we can derive the following theorem.

\begin{theorem}
     \label{thm_1}
     For a finite MDP with $T$ time steps and a specific role policy $\pi(z)$, if any random policy $\pi^\prime$ is $\epsilon$-close to the role policy $\pi(z^\prime)$, then we have
     \begin{equation}
         \left | \frac{J(\pi(z), \pi^\prime)}{J(\pi(z), \pi(z^\prime))} - 1 \right | \leq \epsilon T.
     \end{equation}
\end{theorem}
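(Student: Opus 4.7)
The plan is to reduce the ratio of the two expected returns to a product of per-step likelihood ratios between $\pi'$ and $\pi(z')$, and then bound that product using the $\epsilon$-closeness condition of Definition~\ref{def_1}.

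First, I would expand both $J(\pi(z), \pi')$ and $J(\pi(z), \pi(z'))$ as sums over length-$T$ trajectories $\tau = (s_0, a_0^i, \bm{a}_0^{-i}, s_1, \ldots)$, factoring each trajectory probability into the initial-state distribution, the transition kernel $\bm{\mathcal{P}}$, the ego policy $\pi(\cdot \mid \cdot, z)$, and the other-agent policy. Only the last factor differs between the two expectations, so for every trajectory the likelihood ratio collapses to
$$
\frac{P(\tau \mid \pi(z), \pi')}{P(\tau \mid \pi(z), \pi(z'))} = \prod_{t=0}^{T-1} \frac{\pi'(\bm{a}_t^{-i} \mid \bm{o}_t^{-i})}{\pi(\bm{a}_t^{-i} \mid \bm{o}_t^{-i}, z')}.
$$
By Definition~\ref{def_1}, each factor has the form $1+\delta_t$ with $|\delta_t|<\epsilon$, so the trajectory-level ratio equals $\prod_{t=0}^{T-1}(1+\delta_t)$.

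Second, assuming the reward feature $\psi$ yields non-negative returns along every trajectory (which holds in the SVO setting up to an additive offset), I would pull the pointwise likelihood-ratio bound out of the expectation, so that $J(\pi(z),\pi')/J(\pi(z),\pi(z'))$ is sandwiched between $\min_{\tau} \prod_t (1+\delta_t)$ and $\max_{\tau} \prod_t (1+\delta_t)$. The remaining arithmetic task is to show $|\prod_{t=0}^{T-1}(1+\delta_t) - 1| \le \epsilon T$, which I would attempt via the telescoping identity $\prod_t(1+\delta_t) - 1 = \sum_{t} \delta_t \prod_{s<t}(1+\delta_s)$, combined with a first-order control on the partial products.

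The main obstacle is this last step. The sharp bound on a product of $T$ factors in $[1-\epsilon, 1+\epsilon]$ is $(1+\epsilon)^T - 1$, which strictly exceeds $\epsilon T$ for $T \ge 2$ and $\epsilon > 0$; hence the stated conclusion $\le \epsilon T$ is really a linearization around $\epsilon = 0$, valid up to $O(\epsilon^2 T^2)$ corrections. A fully rigorous statement would either replace $\epsilon T$ by $(1+\epsilon)^T - 1$ or restrict to the regime of small $\epsilon T$ in which the higher-order term is negligible. A secondary obstacle is the non-negativity of the shaped return: if $\psi$ can take negative values along some trajectories, the pointwise sandwiching argument does not transfer cleanly to the ratio, and I would need either a positivity assumption on $\psi$ or to restate the conclusion additively as a bound on $|J(\pi(z),\pi') - J(\pi(z),\pi(z'))|$.
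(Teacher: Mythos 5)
Your decomposition is exactly the one the paper uses: factor the trajectory probability, cancel the common initial-state, transition, and ego-policy factors, and reduce the return ratio to a product of per-step likelihood ratios of the other agent's policy, each controlled by Definition~\ref{def_1}. So the route is the same; the difference is that you are more honest about where it breaks.

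Both of the obstacles you flag are genuine, and neither is resolved in the paper's own proof — it simply writes
\begin{equation*}
\Bigl| \prod_{t=0}^T \tfrac{\pi^\prime(a_t^\prime|o_t^\prime)}{\pi(a_t^\prime|o_t^\prime, z)} - 1 \Bigr| \approx \Bigl| \sum_{t=0}^T \bigl(\tfrac{\pi^\prime(a_t^\prime|o_t^\prime)}{\pi(a_t^\prime|o_t^\prime, z)} - 1\bigr) \Bigr| \leq \epsilon T
\end{equation*}
with two unquantified ``$\approx$'' signs, i.e.\ it linearizes the product exactly as you describe and discards the $O(\epsilon^2 T^2)$ terms without comment. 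You are right that the sharp bound for $T$ factors in $[1-\epsilon,1+\epsilon]$ is $(1+\epsilon)^T - 1 > \epsilon T$, so the stated inequality is only a first-order statement valid for small $\epsilon T$ (or should be restated with the corrected constant). You are also right about the sign issue: the paper jumps from the bound on $|p_{\pi(z),\pi^\prime}(\tau)/p_{\pi(z),\pi(z^\prime)}(\tau) - 1|$ directly to the bound on the ratio of returns with no argument at all, and that step requires the shaped returns $\psi(\bm{\mathcal{R}}(\tau),z)$ to have a fixed sign (otherwise one only gets the additive bound $|J(\pi(z),\pi^\prime) - J(\pi(z),\pi(z^\prime))| \leq \epsilon T \sum_\tau p_{\pi(z),\pi(z^\prime)}(\tau)\,|\psi(\bm{\mathcal{R}}(\tau),z)|$, which does not divide through to the claimed relative bound). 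In short: your proposal reproduces the paper's argument and correctly diagnoses the two places where that argument is non-rigorous; completing it would require either the weakened constant $(1+\epsilon)^T-1$ or an explicit small-$\epsilon T$ regime, plus a positivity assumption on $\psi$.
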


\begin{proof}
     To maintain the readability of the paper, the proof is included in the \ref{app1}.
\end{proof}

Theorem \ref{thm_1} suggests if a random policy $\pi^\prime$ is $\epsilon$-close to the role policy $\pi(z^\prime)$, the expected cumulative reward for the learning policy $\pi(z)$ in relation to the random policy $\pi^\prime$ is approximately the same as optimizing it in relation to the role policy $\pi(z^\prime)$. This implies that the learning agent can obtain an approximately optimal policy by optimizing the expected cumulative reward with respect to a role-based policy.
However, multi-agent games are complex and dynamic, where each agent's actions can unpredictably influence others. This uncertainty makes it challenging for agents to optimize their behavior effectively.
To address this issue, we propose a prediction method to estimate the $\epsilon$-close role policy.

\subsection{Role predictor}
We aim for the learning agent to accurately predict the joint role embeddings of other agents.
To achieve this, we introduce a role predictor $q_\phi$, which is trained to predict the joint role embeddings of other agents, denoted $\hat{\bm{z}}^{-i}_t$, based on the history of the observation $o_{0:t}$ and its own role embedding $z^i$ at time step $t$. Specifically, the prediction model is formulated as:
\begin{equation}
    \label{eq_5}
    \hat{\bm{z}}^{-i}_t = \arg \max_{\bm{z}^{-i}}  q_\phi(\bm{z}^{-i} | o_{0:t}, z^i).
\end{equation}
This model architecture enables the agent to integrate observational history and predefined role to infer the joint role of other agents within the environment, enhancing its predictive accuracy and adaptability.

Utilizing the role predictor $q_\phi$ and theorem \ref{thm_1}, the learning agent optimizes its policy with the role embedding $z^i$ by maximizing its expected cumulative discount reward 
\begin{equation}
    J(\pi(z^i), \pi(\bm{z}^{-i})) \approx J(\pi(z^i, \hat{\bm{z}}^{-i}))= \mathop{\mathbb{E}}_{\substack{s_{t+1} \sim \bm{\mathcal{P}}(\cdot|s_t, a_t, \bm{a}^{-i}_t)\\ a_t \sim \pi(\cdot|o_t, z^i, \hat{\bm{z}}^{-i})}} \left[ \sum_{t=0}^{\infty} \gamma^t \psi\left(\bm{\mathcal{R}}( s_t,a_t,a^{\prime}_t), z^i \right) \right].
\end{equation}
Once the learning agent is assigned a role, it approximates the dynamics of the MARL problem into a unified single-agent RL framework and optimizes its policy independently.

Up to this point, the learning agent can adaptively interact with any unseen policies for a given role, assuming that both the policy and the role predictor are effectively trained.
Unfortunately, the vast space of joint policies from other agents presents significant challenges in terms of generalization and stability.

\subsection{Modeling role interactions as meta-tasks}
To address the challenges of generalization and stability highlighted in the preceding discussion, we introduce meta-learning techniques \citep{duan2016rl2, finn2017model, fakoor2019meta} into RP. Meta-learning enables agents to generalize from a limited set of experiences to new, unseen scenarios effectively. In our approach, we model the interactions between the learning agent and other agents as meta-tasks, utilizing meta-learning techniques to develop the learning agent's policy.

We treat the role embeddings of other agents as context variables that encapsulate relevant information about the task environment. These role embeddings provide the contextual backdrop for the meta-learning process, offering a rich representation of the dynamic interactions within the multi-agent system. The sampling method of role embedding generates a probability distribution over tasks. Thus, we view the entire learning process as meta-learning tasks, where the agent must adapt to various roles in the environment, guided by the context provided by the role embeddings.

\subsection{Role play framework}
In this subsection, we summarize the RP framework and provide detailed insights into its practical training phase. The algorithm is delineated in Algorithm \ref{alg_1}.

\begin{algorithm}[t]
    \caption{Role Play}
    \label{alg_1}
    \begin{algorithmic}[1]
    \State \textbf{Initialization:}
    \State Initialize role space $\mathbb{Z}$
    \State Initialize reward feature mapping function $\psi$
    \State \textbf{Interaction:}
    \For{each iteration}
        \For{each trail}
            \State Sample role embeddings for all agents ${z^1, \ldots, z^{m}}$
            \For {each time step $t$}
                \For {each agent $i$}
                    \State Predict the joint role embeddings of other agents $\hat{z}^{-i}_t \sim q_\phi(\hat{z}^{-i}_t| o_{0:t},z^i)$
                    \State Sample actions $a^i_t \sim \pi(a^i_t | o^i_t, z^i, \hat{z}^{-i}_t)$
                    \State Observe next observation $o_{t+1}$ and rewards $r_t^i$
                    \State $r_t^i \leftarrow \psi(r_t^i, z^i)$
                \EndFor
            \EndFor
        \EndFor
        \State Update the role predictor $q_\phi$ 
        \State Update the policy $\pi$ by an meta learning algorithm
    \EndFor
    \end{algorithmic}
\end{algorithm}

In practical training, we employ the typically meta-learning algorithm, $RL^2$ \citep{duan2016rl2}, to optimize the policy, as illustrated in Fig. \ref{fig_2}. 
Throughout the iterative process, the algorithm dynamically assigns role embeddings $z^i$ to agents, enabling them to predict joint role embeddings of other agents $\hat{z}^{-i}_t$. This facilitates action selection $a_t^i$ based on these roles and observed interactions. Rewards are subsequently mapped using the defined feature function $\psi$. 
The role predictor $q_\phi$ is updated by minimizing the prediction error between the predicted role embedding and the true role embedding. The learning agent is trained to maximize the expected cumulative mapped reward by interacting with other agents.

\begin{figure}[h]
    \centering
    \includegraphics[width=1\textwidth]{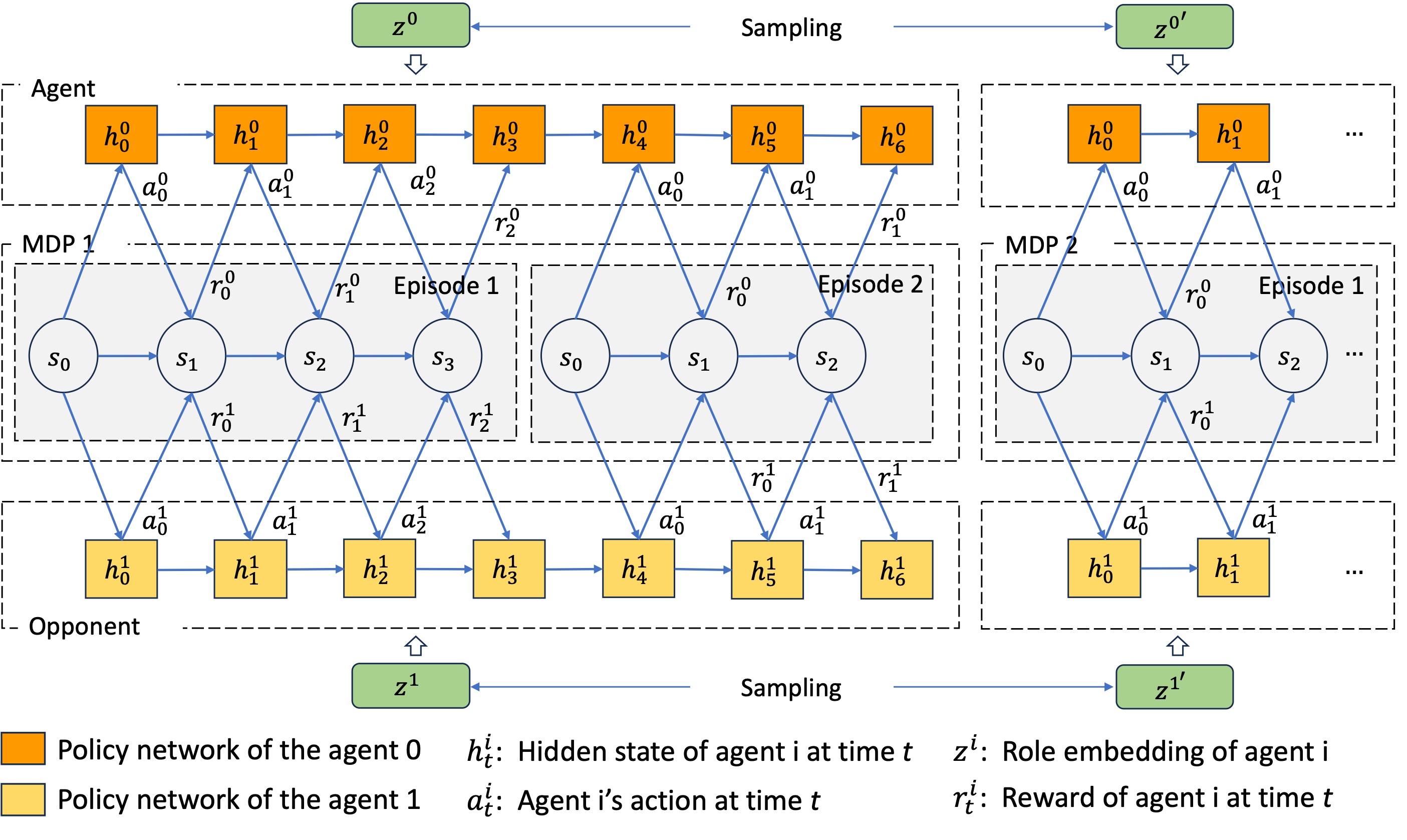} 
    \caption{$RL^2$ based RP}
    \label{fig_2}
\end{figure}

\section{Experiments}
In this study, we initially assess the effectiveness of RP using the cooperative two-player game Overcooked \citep{carroll2019utility}, which is the common benchmark for evaluating the collaboration ability of agents \citep{zhao2023maximum, yu2023learning, rahman2023generating}. We further assess its performance in two-player mixed-motive games, such as Harvest and CleanUp, where agents need to balance cooperation with competition. In these settings, agents must develop strategies that navigate conflicting goals to maximize both individual and collective outcomes. All subsequent internal analysis experiments are conducted within more complex mixed-motive environments. Ablation studies are conducted to verify the effectiveness of the role predictor and the meta-learning methods employed. Role behavior analysis is performed to evaluate the adaptability of the learning agent to different roles. Finally, we analyze performance of the role predictor in predicting the role embeddings of other agents. 

\textbf{Baselines.} We compare the performance of our proposed RP method against several baseline strategies that represent a range of common and state-of-the-art approaches in zero-shot coordination challenge. These include \textit{Trajectory Diversity (TrajeDi)} \citep{lupu2021trajectory}, \textit{AnyPlay} \citep{anyplay}, \textit{Best-Response Diversity (BRDiv)} \citep{rahman2023generating} and \textit{Hidden-Utility Self-Play (HSP)} \citep{yu2023learning}. Each of these methods follows a two-stage process: generating a policy pool and then training a final policy. Notably, while all baseline methods train 16 policies in the first stage to ensure sufficient diversity and adaptability, \textbf{RP achieves comparable or superior performance by training a single policy in a streamlined, single-stage process}, significantly reducing both computational overhead and complexity.

\subsection{Zero Shot Results on Cooperative Games}
Overcooked \citep{carroll2019utility} is a two-player cooperative game intended to test the collaboration ability of agents. In this game, agents work collaboratively to fulfill soup orders using ingredients like onions and tomatoes. Agents can move and interact with items, such as grabbing or serving soups, based on the game state. To complete an order, agents must combine the correct ingredients in a pot, cook them for a specified time, and then serve the soup with a dish to earn rewards. Each order has distinct cooking times and rewards.

\textbf{Experimental setup.} In Overcooked, we employ an event-based reward map function, which is a sparse reward function that rewards agents for some events happening. We select certain events already implemented in Overcooked and assign three preferences levels-hate, neutral and like-for each event which will lead to a negative, zero and positive reward. The reward shaping function is defined as $\psi(r^i, z^i) = r^i + \sum_{k} (z^i_k * E_k)$, where $z^i_k\in[-1,0,1]$ is the preference of event $k$, and $E_k$ is the reward of event $k$. Details of the experimental settings in Overcooked are provided in Appendix \ref{app2_1}.

\begin{figure}[t]
    \centering
    \includegraphics[width=0.99\textwidth]{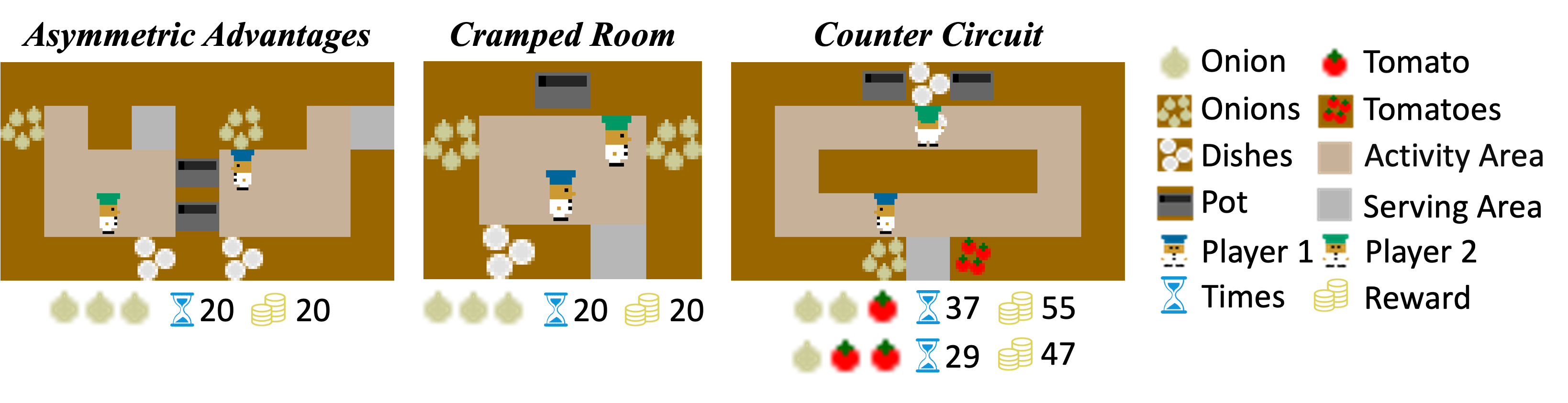}
    \caption{Layouts in Overcooked.}\label{layout}
\end{figure}

Fig. \ref{layout} illustrates the three layouts considered: \textit{Asymmetric Advantages}, \textit{Cramped Room}, and \textit{Counter Circuit}. The \textit{Asymmetric Advantages} layout includes only onions, with two agents separated into distinct areas. \textit{Cramped Room} also features an onion-only layout but places both agents in a shared activity space. \textit{Counter Circuit} presents a more complex setup, with both onions and tomatoes; agents must collect ingredients and prepare the correct soup to earn rewards.

\begin{figure}[h]
    \centering
    \begin{subfigure}[b]{0.49\textwidth}
        \centering
        \includegraphics[width=\linewidth]{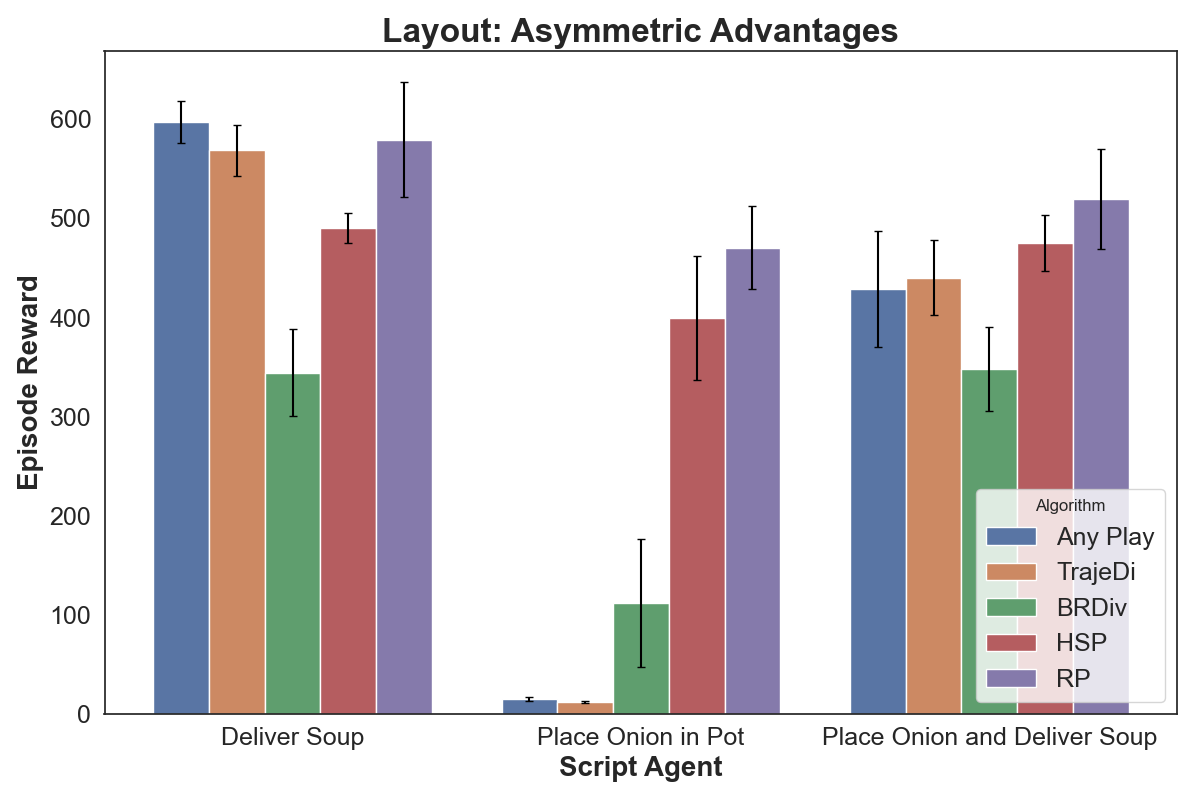}
    \end{subfigure}
    \hfill
    \begin{subfigure}[b]{0.49\textwidth}
        \centering
        \includegraphics[width=\linewidth]{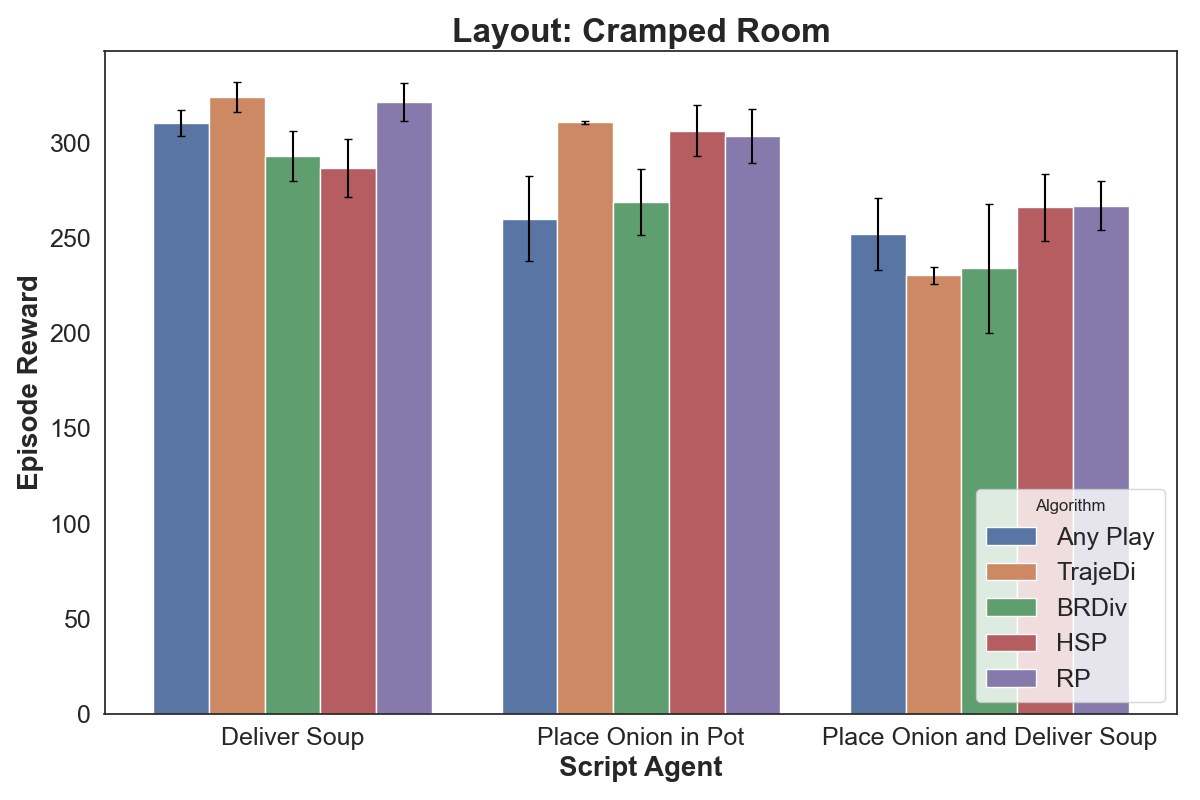}       
    \end{subfigure}
    \caption{Zero-shot evaluation results with script agents on \textit{Asymmetric Advantages} and \textit{Cramped Room}.}
    \label{fig_overcooked1}
\end{figure}

We evaluated the performance of all algorithms in tasks where the trained agent interacts with various scripted agents implemented using HSP \citep{yu2023learning}. In each task, policies engaged with the scripted agent over 100 episodes, and we calculated the mean and standard deviation of the episode rewards. Since the \textit{Asymmetric Advantages} and \textit{Cramped Room} layouts contain only onions, we assessed performance across the following tasks: \textit{Deliver Soup Agent}, which the scripted agent only delivers soup; \textit{Place Onion in Pot Agent}, which the agent only places the onion in the pot; and \textit{Place Onion in Pot and Deliver Soup Agent}, which the agent performs both actions. Fig. \ref{fig_overcooked1} shows the evaluation results for these onion-only layouts. In the \textit{Asymmetric Advantages} layout, AnyPlay and TrajeDi perform well in \textit{Deliver Soup Agent} task but struggle in the \textit{Place Onion in Pot Agent} task, as their policy pools from stage one rarely contain potting policies, which limits their adaptability. By contrast, both HSP and RP overcome this issue using reward shaping methods. While HSP is limited by the size of its policy pool, RP achieves higher rewards by generating a wider range of policies through role embeddings, allowing it to adaptively apply different reward shaping functions. In the simpler layout, \textit{Cramped Room}, all algorithms achieve satisfactory performance. However, RP exhibits a higher level of adaptability by consistently adjusting its policies to suit varying tasks, outperforming others in maintaining robust results across different scripted agent interactions. 

\begin{figure}[h]
    \centering
    \includegraphics[width=\linewidth]{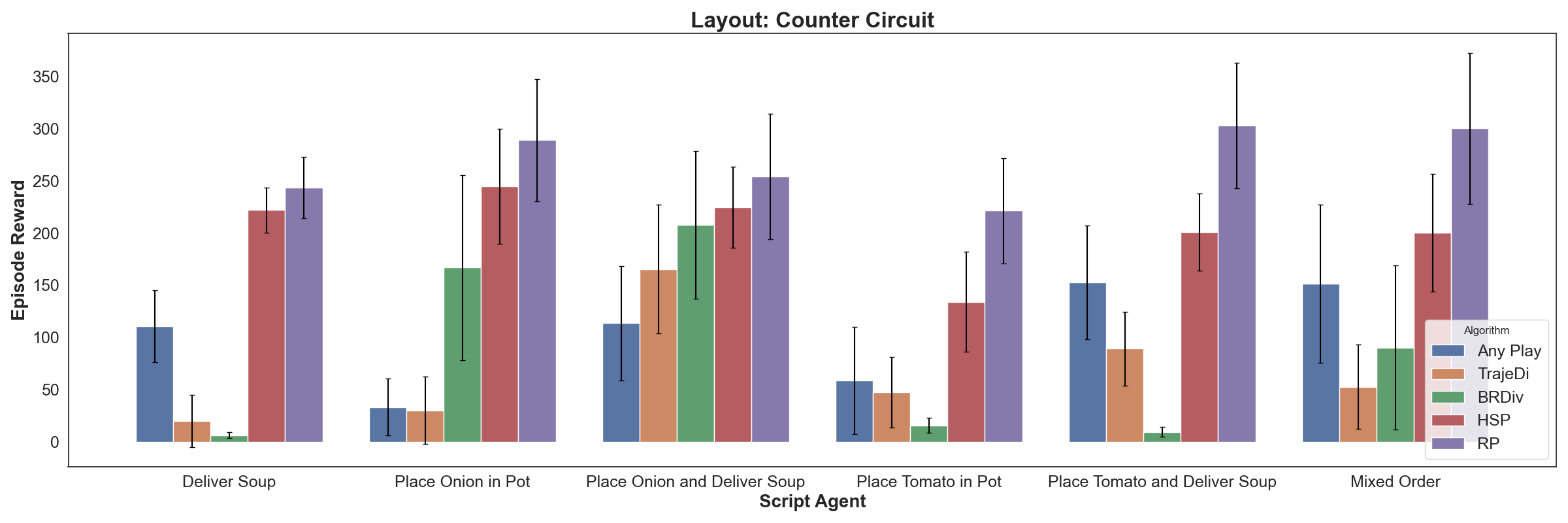}       
    \caption{Zero-shot evaluation results with script agents on \textit{Counter Circuit}.}
    \label{fig_overcooked2}
\end{figure}

In the more complex \textit{Counter Circuit} layout, where agents must gather onions and tomatoes to prepare correct soup, we evaluated performance across three additional tasks: \textit{Place Tomato in Pot Agent}, where the agent only places tomatoes in the pot; \textit{Place Tomato in Pot and Deliver Soup Agent}, where the agent both places tomatoes in the pot and delivers soup; and \textit{Mixed Order}, where the agent places various ingredients in the pot. The results is shown in Fig. \ref{fig_overcooked2}.
AnyPlay, Trajedi and BRDiv obtain policy pools through an implicit way, which limits their ability to cover the diversity of strategies required in complex scenarios, resulting in poor performance across all tasks. In contrast, RP and HSP apply reward shaping methods (an explicit approach) to obtain a broader range of policies, leading to higher rewards. RP consistently outperforms HSP, highlighting its superior adaptability in challenging scenarios. Notably, some results exhibit significant standard deviations, due to the sparse reward structure, where collecting or delivering orders variably impacts the total score.

\subsection{Zero Shot Results on Mixed-Motive Games}
In this subsection, we evaluate the performance of the RP framework in two-player mixed-motive games, \textit{Harvest} and \textit{CleanUp} \citep{hughes2018inequity, leibo2021meltingpot}, which require agents to balance individual contributions against collective benefits. These games introduce a higher level of complexity due to inherent conflicting interests, providing a challenging test environment for agents. To clarify the mixed-motive game settings, we provide detailed explanations of these games.

\textit{Harvest}: agents face a \textit{common-pool resource dilemma} where apple resources regenerate more effectively when left unharvested in groups. Agents can choose to harvest aggressively, risking future availability, or sustainably, by leaving some apples to promote long-term regeneration. Additionally, agents can use beams to penalize others, causing the penalized agent to temporarily disappear from the game for a few steps. 

\textit{CleanUp}: a \textit{public goods game} in which apple growth in an orchard is hindered by rising pollution levels in a nearby river. When pollution is high, apple growth stops entirely. Agents can reduce pollution by leaving the orchard to work in polluted areas, highlighting the importance of individual efforts in maintaining shared resources. Agents are also able to use beams to penalize others. 

\begin{figure}[h]
    \centering
    \includegraphics[width=\linewidth]{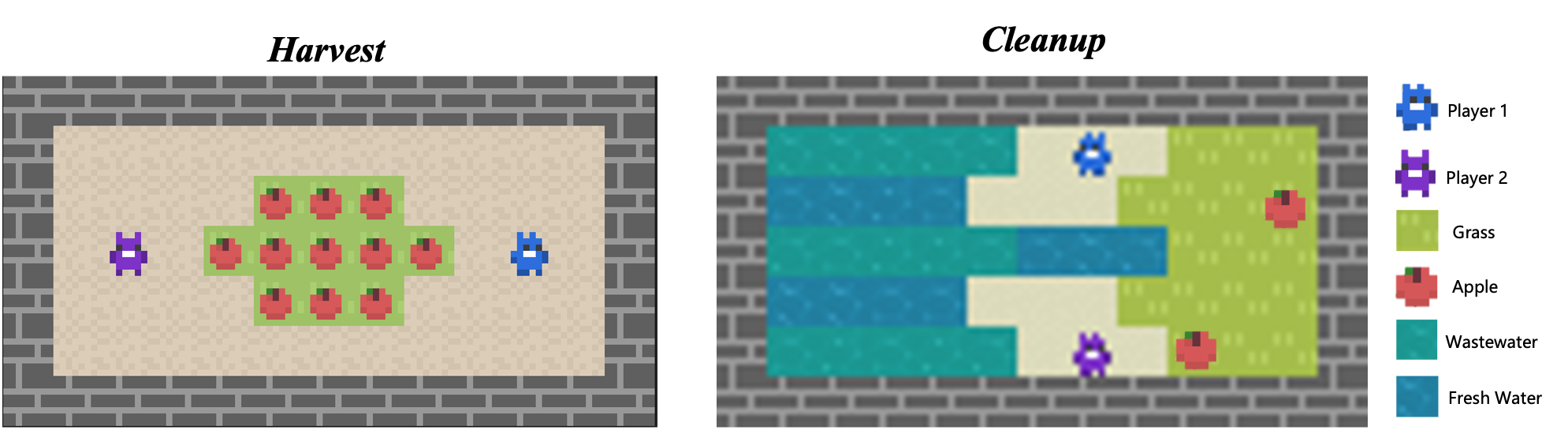}       
    \caption{Harvest and CleanUp games.}
    \label{fig_mp}
\end{figure}

\textbf{Experimental setup.} We implemented the scenarios using the MeltingPot framework \citep{leibo2021meltingpot} and applied SVO \citep{murphy2014social} to categorize agent roles. Utilizing the ring formulation of SVO, we introduced SVO-based (role-based) rewards as intrinsic motivators for agents to learn role-specific strategies. The reward feature mapping function $\psi$ is defined as:
\begin{equation}
\psi(\bm{r}, z^i) = w * r^i + (1-w) * \underbrace{\left| \cos(z^i) r^i + \sin(z^i) \bar{r}^{-i} \right|}_{\text{SVO-based reward shaping}},
\end{equation}
where $r^i$ denotes the individual reward for agent $i$, $\bar{r}^{-i}$ the average reward of other agents, and $w$ a hyperparameter adjusting the impact of role-based rewards. We use eight main roles $\left\{ z^i = \frac{k\pi}{4} \mid k = -4, \dots, 3 \right\}$ for training, as introduced in \ref{sec3}.

In this experiment, we evaluate performance across three tasks for each game. Each task involves a policy pretrained with a specific reward function, with the learning agent interacting with this pretrained agent to assess performance. The tasks are: 
\begin{itemize}
    \item \textit{Selfish Agent}: the agent pretrained with its own reward function.
    \item \textit{Prosocial Agent}: the agent pretrained with the collective reward function.
    \item \textit{Inequity-Averse Agent}: the agent pretrained with the inequity-averse reward function, as described in \citep{hughes2018inequity}.
\end{itemize}
Since the baseline algorithms were not designed for mixed-motive games, we tested them in a cooperative setting by training with collective rewards and evaluating collective rewards in both Harvest and CleanUp to ensure fairness. Further details on baseline algorithms and experimental settings in these mixed-motive games are provided in \ref{app2_2}.

As shown in Table \ref{tab_1}, RP demonstrates comparable performance in both Harvest and CleanUp games, especially in CleanUp game. In Harvest, although AnyPlay and HSP ultimately learn a non-confrontational policy, they contribute little to the collective reward. BRDiv learns a resource-competing policy, leading to resource depletion in the \textit{Selfish Agent} task, and performs moderately in the \textit{Prosocial Agent} task. In the \textit{Inequity-Averse Agent} task, all baselines perform poorly. Instead, RP shows good performance in all tasks, because RP learned to dodge and harvest in interactions with aggressive roles. When RP play an individualistic role ($z=0$), it tends to use the beam to penalize other agents to get a individual reward, which lowers the collective reward. 

In CleanUp, all baselines failed to learn a robust policy that achieves high collective rewards in evaluation tasks, as the environment requires agents to divide the work and cooperate to achieve a high collective outcomes. The policy pools they generated in the first training stage are insufficient to help them learn effective policies. In contrast, RP can utilize the exploration data with cooperative roles to learn a good policy. In the \textit{Selfish Agent} task, RP learns to clean pollution all the time. In the \textit{Prosocial Agent} task, RP learns to clean the pollution while the other agent is harvesting, once the other agent leaves the orchard, RP will harvest apples to maximize collective reward. Since the pretrained inequity-averse agent has lower advantageous inequity aversion\footnote{Details are shown in \ref{app2_2}}, it converges to harvest apples if they are available in the orchard. Therefore, in the \textit{Inequity-Averse Agent} task, RP learns to clean the pollution continuously. We prefer readers to our project website\footnote{https://weifan408.github.io/role\_play\_web/} for visualizations of the results.

\begin{table}
    \centering
    \caption{Collective reward of different algorithms in the Harvest and CleanUp.}
    \label{tab_1}
    \resizebox{\linewidth}{!}{
    \begin{tabular}{ccccccc} 
        \toprule
         & \multicolumn{3}{c}{Harvest} & \multicolumn{3}{c}{CleanUp} \\ 
        \cmidrule(r){2-4} \cmidrule(r){5-7} 
        Algorithm & selfish & prosocial & inequity-averse & selfish & prosocial & inequity-averse \\ 
        \midrule 
        AnyPlay & \textbf{20.62 (4.91)} & \textbf{25.13 (5.29)} & 3.19 (9.12) & 11.58 (8.41)  & 2.42 (12.16) & -1.90 (4.21) \\ 
        Trajedi & 19.74 (5.21) & 23.95 (5.96) &  3.07 (11.06) & -3.41(20.87)  & -23.00(16.49) & -22.30 (13.05) \\
        BRDiv & 12.45 (8.73) & 16.46(5.57) &  -26.97 (17.13) & 8.56 (10.36) & 2.03(17.47) & -5.38 (9.41) \\
        HSP & \textbf{21.71 (8.01)} & 23.24(14.23) &  4.95 (10.95) & 10.90 (9.04) & 8.15(11.86) & -1.20(3.98)\\        
        RP (z=$0$) & 6.94 (12.71) & 13.28 (14.06) & 6.35 (5.82) & 13.23 (11.05) & 24.37 (13.45) & 8.55 (11.16) \\ 
        RP (z=$\frac{\pi}{4}$) & \textbf{20.46 (3.92)} & 23.57 (5.56) & \textbf{15.15 (4.35)} & \textbf{23.35 (13.93)}	 & \textbf{38.09 (7.71)	} & \textbf{23.42 (6.65)} \\
        RP (z=$\frac{\pi}{2}$) &  \textbf{20.60 (3.52)} &  \textbf{25.84 (3.98)} & \textbf{15.61 (6.94)} & \textbf{22.73 (12.58))} &  \textbf{39.53 (6.99)} &  18.64 (10.65) \\
        \bottomrule
    \end{tabular}
    }
\end{table}

\begin{figure}[h]
    \centering
    \begin{subfigure}[b]{0.49\textwidth}
        \centering
        \includegraphics[width=\linewidth]{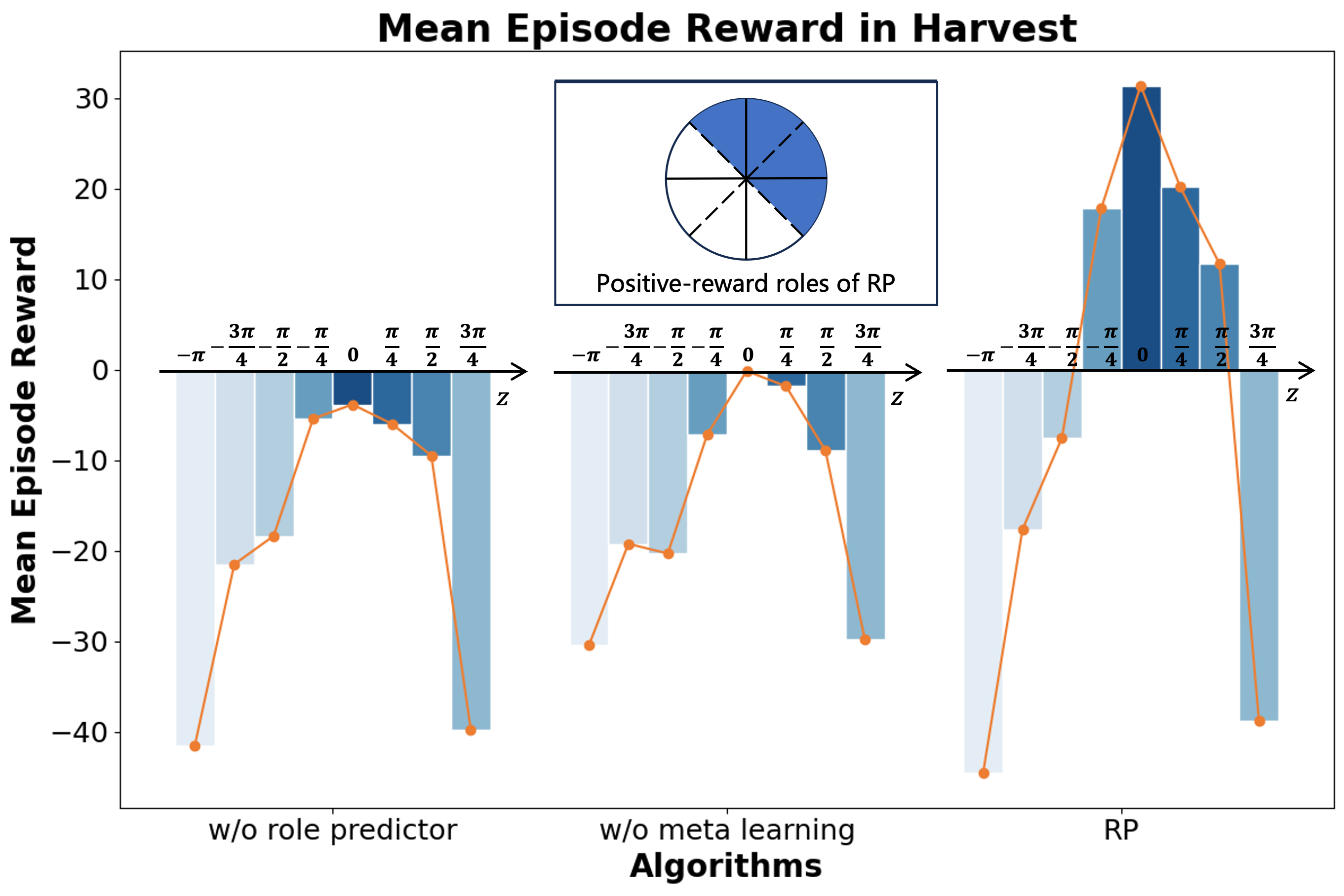}
    \end{subfigure}
    \hfill
    \begin{subfigure}[b]{0.49\textwidth}
        \centering
        \includegraphics[width=\linewidth]{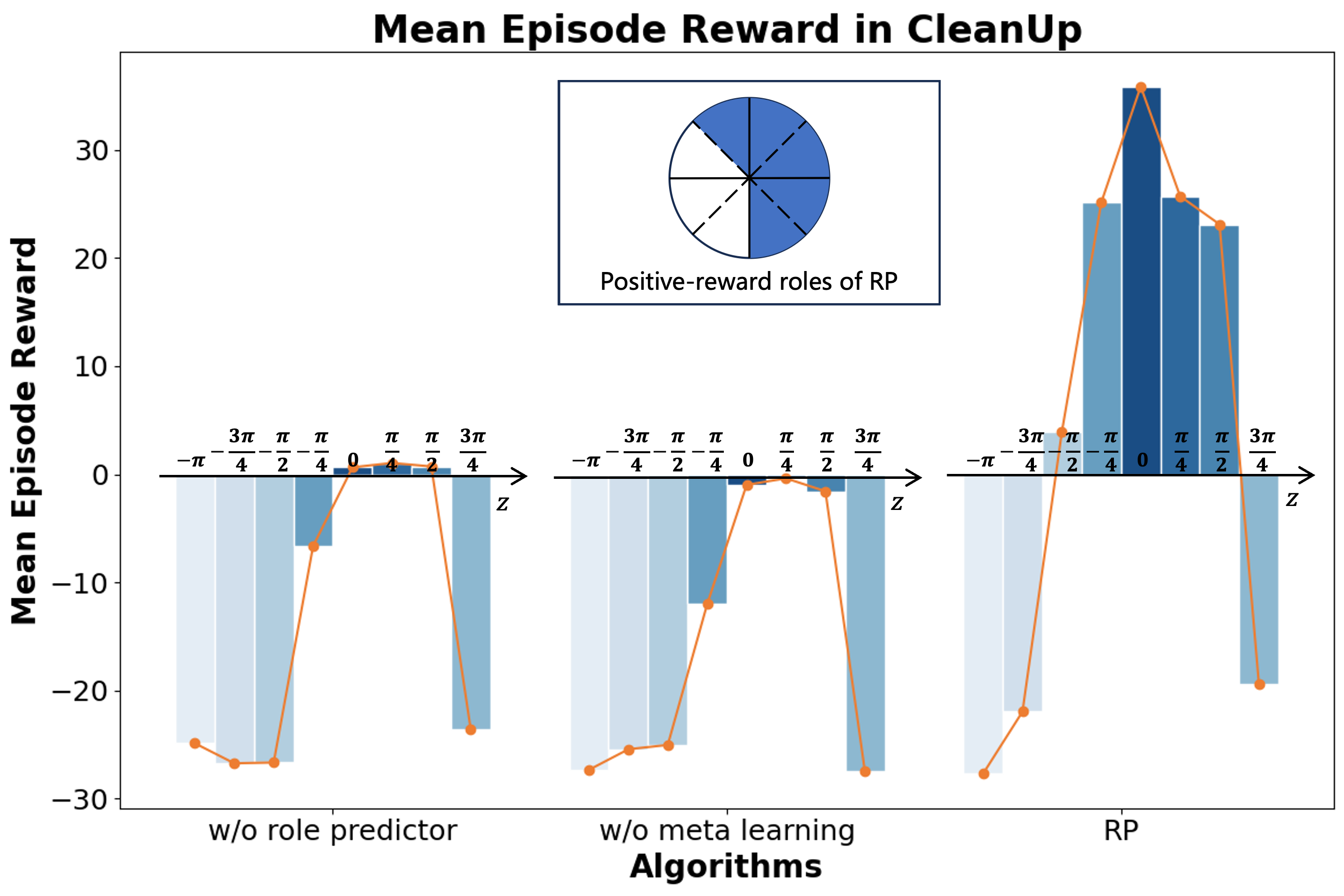}       
    \end{subfigure}
    \caption{Impact of Role Predictor and Meta-Learning on Performance in Harvest and CleanUp.}
    \label{fig_ablation}
\end{figure}

\subsection{Ablation Studies}
\label{ablation}
In this subsection, we examine the impact of the role predictor and the meta-learning method on the performance of the RP framework through comprehensive ablation studies. We assess the contributions of each component by systematically removing the role predictor and the meta-learning method. We analyze interactions by having each role engage separately with all eight distinct roles. For each unique role pairing, agents interact over a series of 1000 episodes. Given the considerable variation in rewards associated with different roles, we employ the mean episode reward of each role as a metric for assessing performance. The results are shown in Fig. \ref{fig_ablation}.
The figure clearly illustrates RP's superior adaptability to diverse roles compared to configurations lacking the role predictor and meta-learning method. The inset provides a detailed view of RP's roles that achieve positive average rewards. Roles commonly seen in real-life settings—such as Altruistic, Prosocial, Individualistic, and Competitive—yield positive rewards. In CleanUp, the Sadistic role also achieves a small positive reward due to its exploitative tendency, though this behavior lacks a virtuous cycle, resulting in only minimal positive gains in average. For the Individualistic role ($z=0$), RP achieves the highest positive rewards. By contrast, in the ablation configuration without the role predictor and meta-learning method, the rewards are significantly lower even negative. Notably, RP attains positive average rewards when adopting prosocial roles and incurs negative rewards with antisocial roles, whereas policies from the ablation studies consistently produce either negative rewards or considerably lower positive rewards across all roles.

\subsection{Role Behavior Analysis}
In our analysis of the RP framework, we investigated the principal role behaviors exhibited in the Harvest and CleanUp games. Experimental setting is the same with \ref{ablation}.
Fig. \ref{fig_role_analysis} visualizes our findings for four main roles that reflect real-life behavioral patterns: Competitive ($z=-\frac{\pi}{4}$), Individualistic ($z=0$), Prosocial ($z=\frac{\pi}{4}$), and Altruistic ($z=\frac{\pi}{2}$). 

The Individualistic role, marked by a significantly larger area on the radar chart, suggests an advantageous position in interpersonal dynamics, a phenomenon supported by findings in individualism studies \citep{triandis2018individualism}. In contrast, the Competitive role, adept at penalizing others and monopolizing resources such as apples, frequently triggers retaliatory actions from other agents, often culminating in considerable negative rewards. The radar chart shows a large number of blank spaces in the lower half, indicating frequent failures in interactions with antisocial roles.

In the more competitive Harvest game, the Altruistic role demonstrates vulnerabilities as it primarily focuses on the welfare of others, often to its own detriment. Meanwhile, the Prosocial role, which aims to harmonize self-interest with the collective good, shows robust performance in this environments by fostering optimal group outcomes.

Conversely, in the CleanUp game, the Prosocial and Altruistic roles engage in environmentally beneficial behaviors, such as pollution removal. While these actions yield lower direct rewards for themselves, they facilitate greater social welfare for other agents. However, roles driven by self-interest, such as the Competitive and Individualistic roles, often reap benefits from these altruistic actions without bearing any of the associated costs. Additionally, when interacting with the more self-sacrificing Martyr role, both the Prosocial and Altruistic roles experience enhanced social welfare. 

Overall, the role behavior analysis demonstrates the adaptability of the RP framework across diverse roles, with each role maintaining its distinct SVO.

\begin{figure}[h]
    \centering
    \begin{subfigure}[b]{0.49\textwidth}
        \centering
        \includegraphics[width=\linewidth]{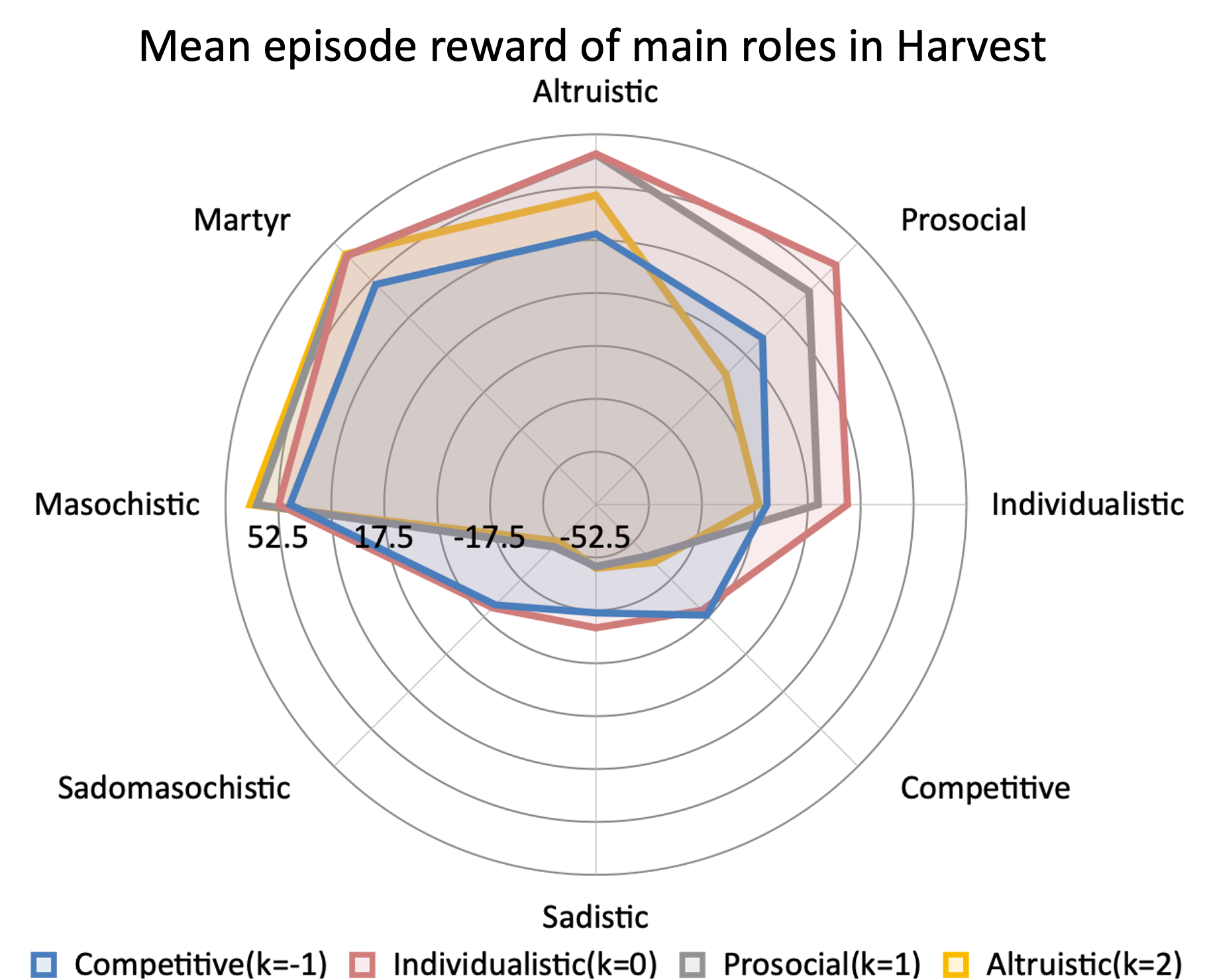}
    \end{subfigure}
    \hfill
    \begin{subfigure}[b]{0.49\textwidth}
        \centering
        \includegraphics[width=\linewidth]{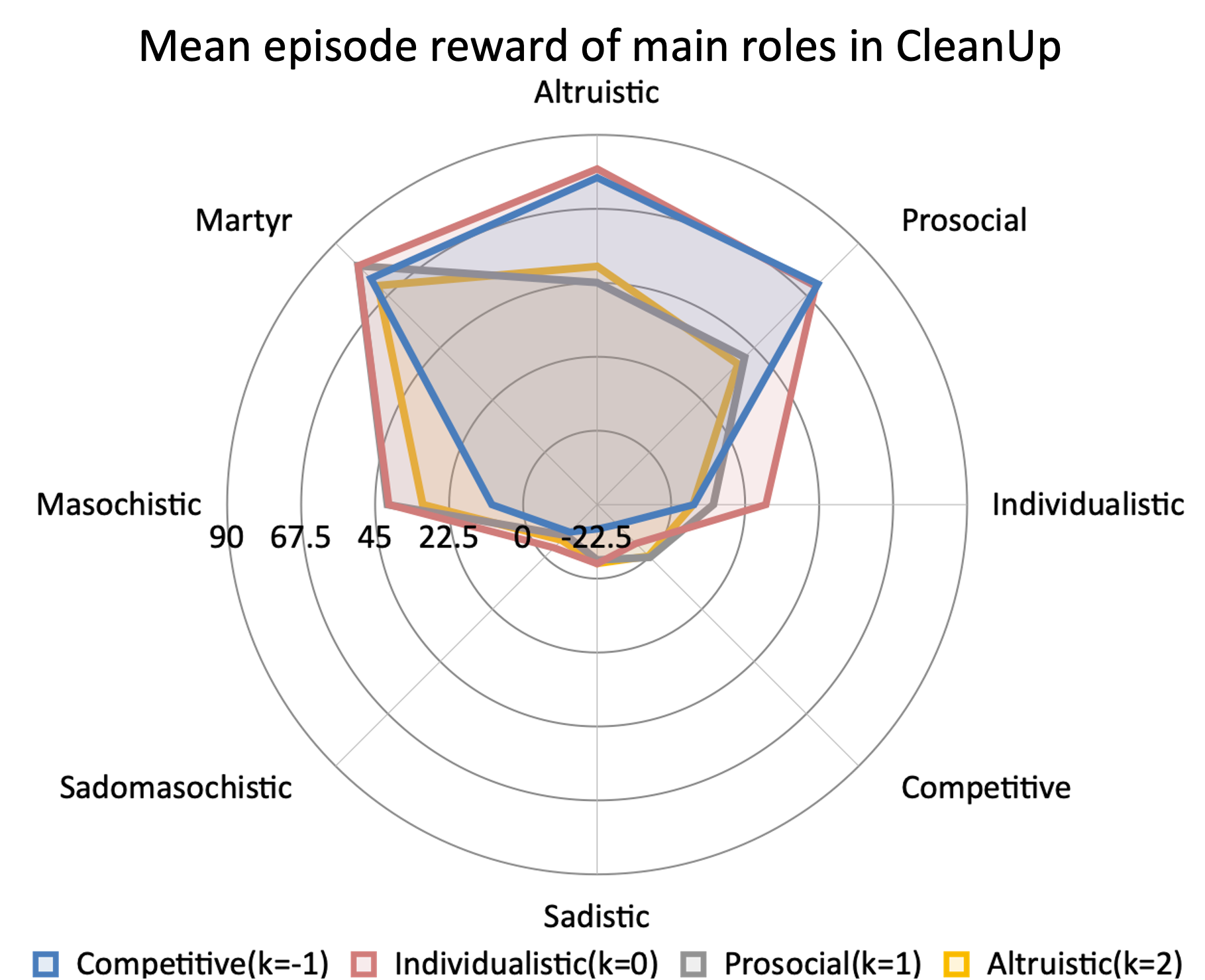}       
    \end{subfigure}
    \caption{Role analysis in Harvest and CleanUp.}
    \label{fig_role_analysis}
\end{figure}

\subsection{Prediction Analysis}
In the RP framework, the role predictor is utilized to estimate the joint role embeddings of other agents. Fig. \ref{fig_pred} displays the prediction outcomes of the role predictor $q(\phi)$ within both the Harvest and CleanUp games. Each heatmap represents the prediction results of a role interacting with all roles. The colors in the heatmap indicate the proportion of predictions that align with actual roles, with colors closer to the diagonal representing higher accuracy. This visualization method clearly illustrates that the most accurate predictions are those where the color intensity peaks along the diagonal, indicating a strong alignment between predicted and actual roles. These results confirm the efficacy of the role predictor in accurately forecasting the role embeddings of other agents. Notably, the predicted roles align with the actual roles or their close counterparts. Although prediction performance is slightly reduced in the CleanUp game due to its more complex environmental dynamics compared to the Harvest game, the role predictor occasionally outputs incorrect roles that exhibit similar behaviors to the actual roles, which still aids the learning agent in adapting to the environment.
 
\begin{figure}[ht]
    \centering
    \begin{subfigure}[b]{\textwidth}
        \centering
        \includegraphics[width=\textwidth]{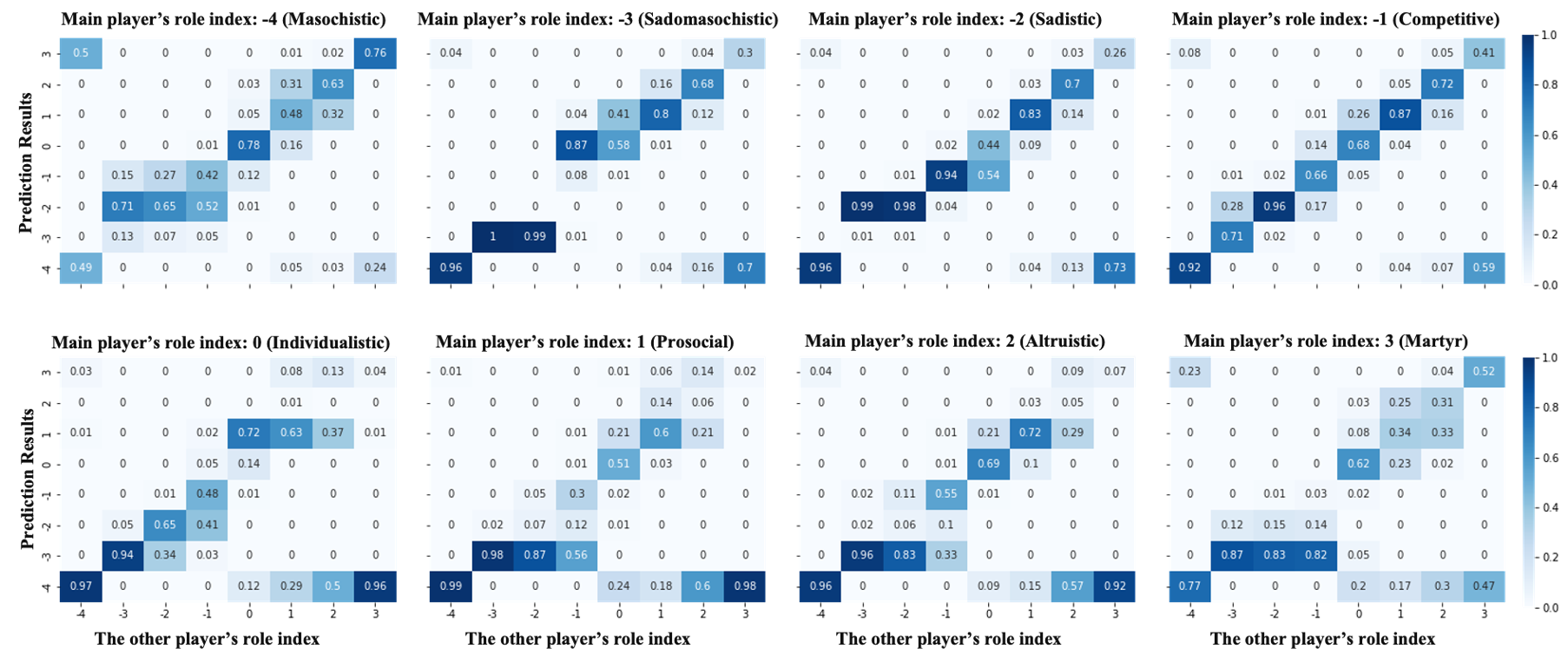}  
        \caption{Prediction results in Harvest}
        \label{fig_pred_harvest}
    \end{subfigure}
    \hfill
    \begin{subfigure}[b]{\textwidth}
        \centering
        \includegraphics[width=\textwidth]{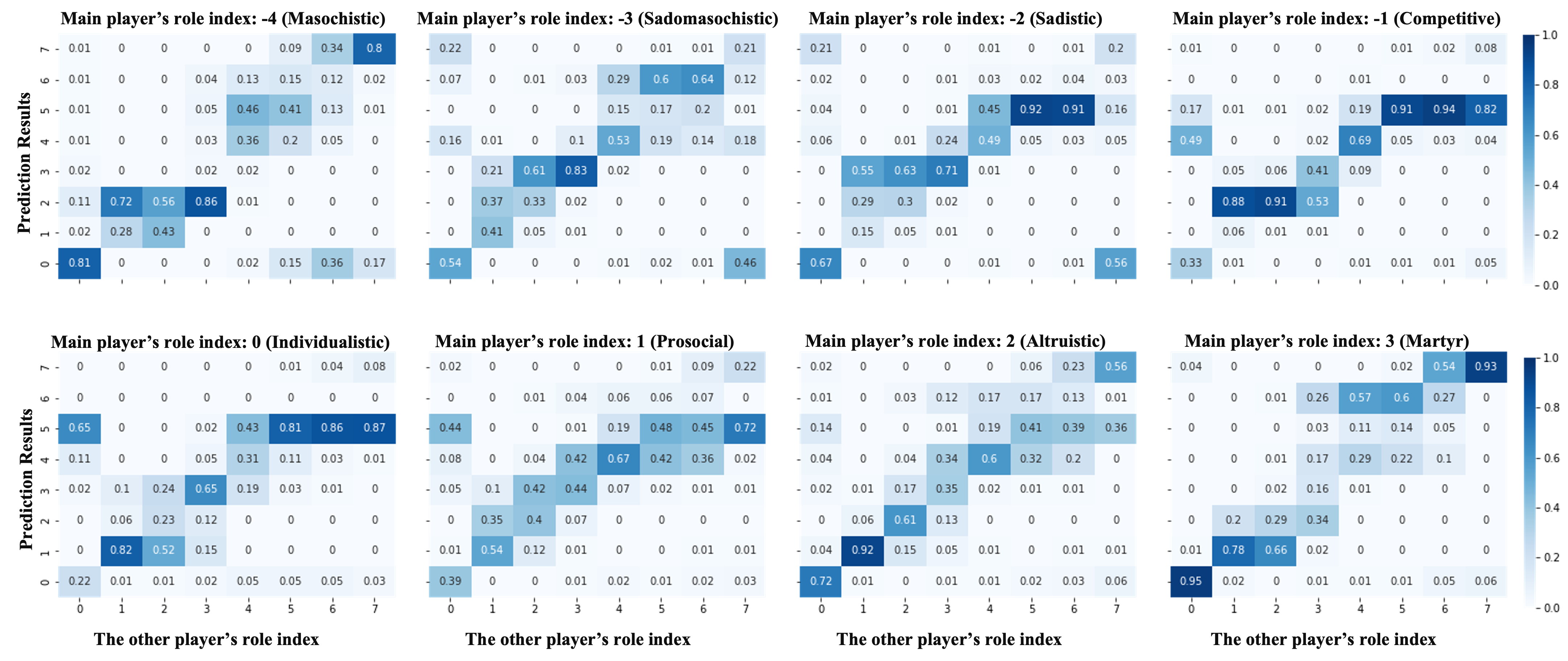}
        \caption{Prediction results in CleanUp}
        \label{fig_pred_cleanup}
    \end{subfigure}
    \caption{Prediction results of the role predictor $q(\phi)$.}
    \label{fig_pred}
\end{figure}

\section{Conclusion}
In this study, we introduced a novel framework called Role Play (RP) to enable agents to adapt varying roles, effectively addressing the zero-shot coordination challenge in MARL. Drawing inspiration from social intuition, RP incorporates a role predictor that estimates the joint role embeddings of other agents, which in turn facilitates the adaptation of the learning agent's role based on these predictions. We demonstrated the effectiveness of RP across both cooperative and mixed-motive environments, where agents must balance individual contributions against collective benefits, achieving comparable performance compared to state-of-the-art baselines.

However, the RP framework is not without its limitations. As the number of agents increases, the complexity of role prediction escalates, potentially challenging the ability of role predictor to accurately forecast the joint role embeddings of other agents. Additionally, the reward feature mapping function $\psi$ plays a crucial role in the framework, with its selection significantly influencing the learning process.

Future work will aim to tackle these challenges by enhancing the accuracy of role prediction and optimizing the choice of the reward feature mapping function. We also plan to extend the RP framework to more complex multi-agent scenarios, further testing its scalability and adaptability.

\medskip
{\small  
\bibliography{x}

\begin{thebibliography}{1}
\providecommand{\natexlab}[1]{#1}
\providecommand{\url}[1]{\texttt{#1}}
\expandafter\ifx\csname urlstyle\endcsname\relax
  \providecommand{\doi}[1]{doi: #1}\else
  \providecommand{\doi}{doi: \begingroup \urlstyle{rm}\Url}\fi

\bibitem[Lupu et~al.(2021)Lupu, Cui, Hu, and Foerster]{lupu2021trajectory}
Andrei Lupu, Brandon Cui, Hengyuan Hu, and Jakob Foerster.
\newblock Trajectory diversity for zero-shot coordination.
\newblock In \emph{International conference on machine learning}, pages
  7204--7213. PMLR, 2021.

\end{thebibliography}


\begin{thebibliography}{}

\bibitem[Bai et~al., 2020]{bai2020near}
Bai, Y., Jin, C., and Yu, T. (2020).
\newblock Near-optimal reinforcement learning with self-play.
\newblock {\em Advances in neural information processing systems}, 33:2159--2170.

\bibitem[Beck et~al., 2023]{beck2023survey}
Beck, J., Vuorio, R., Liu, E.~Z., Xiong, Z., Zintgraf, L., Finn, C., and Whiteson, S. (2023).
\newblock A survey of meta-reinforcement learning.
\newblock {\em arXiv preprint arXiv:2301.08028}.

\bibitem[Berner et~al., 2019]{berner2019dota}
Berner, C., Brockman, G., Chan, B., Cheung, V., D{\k{e}}biak, P., Dennison, C., Farhi, D., Fischer, Q., Hashme, S., Hesse, C., et~al. (2019).
\newblock Dota 2 with large scale deep reinforcement learning.
\newblock {\em arXiv preprint arXiv:1912.06680}.

\bibitem[Carroll et~al., 2019]{carroll2019utility}
Carroll, M., Shah, R., Ho, M.~K., Griffiths, T., Seshia, S., Abbeel, P., and Dragan, A. (2019).
\newblock On the utility of learning about humans for human-ai coordination.
\newblock {\em Advances in neural information processing systems}, 32.

\bibitem[Duan et~al., 2016]{duan2016rl2}
Duan, Y., Schulman, J., Chen, X., Bartlett, P.~L., Sutskever, I., and Abbeel, P. (2016).
\newblock Rl2: Fast reinforcement learning via slow reinforcement learning. 2016.
\newblock {\em URL http://arxiv.org/abs/1611.02779}.

\bibitem[Fakoor et~al., 2019]{fakoor2019meta}
Fakoor, R., Chaudhari, P., Soatto, S., and Smola, A.~J. (2019).
\newblock Meta-q-learning.
\newblock {\em arXiv preprint arXiv:1910.00125}.

\bibitem[Finn et~al., 2017]{finn2017model}
Finn, C., Abbeel, P., and Levine, S. (2017).
\newblock Model-agnostic meta-learning for fast adaptation of deep networks.
\newblock In {\em International conference on machine learning}, pages 1126--1135. PMLR.

\bibitem[Foerster et~al., 2017]{foerster2017learning}
Foerster, J.~N., Chen, R.~Y., Al-Shedivat, M., Whiteson, S., Abbeel, P., and Mordatch, I. (2017).
\newblock Learning with opponent-learning awareness.
\newblock {\em arXiv preprint arXiv:1709.04326}.

\bibitem[Garnelo et~al., 2021]{garnelo2021pick}
Garnelo, M., Czarnecki, W.~M., Liu, S., Tirumala, D., Oh, J., Gidel, G., van Hasselt, H., and Balduzzi, D. (2021).
\newblock Pick your battles: Interaction graphs as population-level objectives for strategic diversity.
\newblock {\em arXiv preprint arXiv:2110.04041}.

\bibitem[Gibbons et~al., 1992]{gibbons1992primer}
Gibbons, R. et~al. (1992).
\newblock A primer in game theory.

\bibitem[Hu et~al., 2020]{hu2020other}
Hu, H., Lerer, A., Peysakhovich, A., and Foerster, J. (2020).
\newblock “other-play” for zero-shot coordination.
\newblock In {\em International Conference on Machine Learning}, pages 4399--4410. PMLR.

\bibitem[Hughes et~al., 2018]{hughes2018inequity}
Hughes, E., Leibo, J.~Z., Phillips, M., Tuyls, K., Due{\~n}ez-Guzman, E., Garc{\'\i}a~Casta{\~n}eda, A., Dunning, I., Zhu, T., McKee, K., Koster, R., et~al. (2018).
\newblock Inequity aversion improves cooperation in intertemporal social dilemmas.
\newblock {\em Advances in neural information processing systems}, 31.

\bibitem[Jellema et~al., 2024]{jellema2024social}
Jellema, T., Macinska, S.~T., O’Connor, R.~J., and Skodova, T. (2024).
\newblock Social intuition: behavioral and neurobiological considerations.
\newblock {\em Frontiers in Psychology}, 15:1336363.

\bibitem[Kim et~al., 2021]{kim2021policy}
Kim, D.~K., Liu, M., Riemer, M.~D., Sun, C., Abdulhai, M., Habibi, G., Lopez-Cot, S., Tesauro, G., and How, J. (2021).
\newblock A policy gradient algorithm for learning to learn in multiagent reinforcement learning.
\newblock In {\em International Conference on Machine Learning}, pages 5541--5550. PMLR.

\bibitem[Kirk et~al., 2021]{kirk2021survey}
Kirk, R., Zhang, A., Grefenstette, E., and Rockt{\"a}schel, T. (2021).
\newblock A survey of generalisation in deep reinforcement learning.
\newblock {\em arXiv preprint arXiv:2111.09794}, 1:16.

\bibitem[Ko, 2006]{ko2006mathematical}
Ko, S. (2006).
\newblock Mathematical analysis.

\bibitem[Lanctot et~al., 2017]{lanctot2017unified}
Lanctot, M., Zambaldi, V., Gruslys, A., Lazaridou, A., Tuyls, K., P{\'e}rolat, J., Silver, D., and Graepel, T. (2017).
\newblock A unified game-theoretic approach to multiagent reinforcement learning.
\newblock {\em Advances in neural information processing systems}, 30.

\bibitem[Leibo et~al., 2021]{leibo2021meltingpot}
Leibo, J.~Z., nez Guzm\'an, E.~D., Vezhnevets, A.~S., Agapiou, J.~P., Sunehag, P., Koster, R., Matyas, J., Beattie, C., Mordatch, I., and Graepel, T. (2021).
\newblock Scalable evaluation of multi-agent reinforcement learning with melting pot.
\newblock PMLR.

\bibitem[Lieberman, 2000]{lieberman2000intuition}
Lieberman, M.~D. (2000).
\newblock Intuition: a social cognitive neuroscience approach.
\newblock {\em Psychological bulletin}, 126(1):109.

\bibitem[Lu et~al., 2022]{lu2022model}
Lu, C., Willi, T., De~Witt, C. A.~S., and Foerster, J. (2022).
\newblock Model-free opponent shaping.
\newblock In {\em International Conference on Machine Learning}, pages 14398--14411. PMLR.

\bibitem[Lucas and Allen, 2022]{anyplay}
Lucas, K. and Allen, R.~E. (2022).
\newblock Any-play: An intrinsic augmentation for zero-shot coordination.
\newblock In {\em Proceedings of the 21st International Conference on Autonomous Agents and Multiagent Systems}, AAMAS '22, pages 853--861, Richland, SC. International Foundation for Autonomous Agents and Multiagent Systems.

\bibitem[Lupu et~al., 2021]{lupu2021trajectory}
Lupu, A., Cui, B., Hu, H., and Foerster, J. (2021).
\newblock Trajectory diversity for zero-shot coordination.
\newblock In {\em International conference on machine learning}, pages 7204--7213. PMLR.

\bibitem[McKee et~al., 2020]{svo}
McKee, K.~R., Gemp, I., McWilliams, B., Du\`{e}\~{n}ez Guzm\'{a}n, E.~A., Hughes, E., and Leibo, J.~Z. (2020).
\newblock Social diversity and social preferences in mixed-motive reinforcement learning.
\newblock AAMAS '20, pages 869--877, Richland, SC. International Foundation for Autonomous Agents and Multiagent Systems.

\bibitem[Mitchell et~al., 2021]{mitchell2021offline}
Mitchell, E., Rafailov, R., Peng, X.~B., Levine, S., and Finn, C. (2021).
\newblock Offline meta-reinforcement learning with advantage weighting.
\newblock In {\em International Conference on Machine Learning}, pages 7780--7791. PMLR.

\bibitem[Murphy and Ackermann, 2014]{murphy2014social}
Murphy, R.~O. and Ackermann, K.~A. (2014).
\newblock Social value orientation: Theoretical and measurement issues in the study of social preferences.
\newblock {\em Personality and Social Psychology Review}, 18(1):13--41.

\bibitem[Peng et~al., 2021]{peng2021learning}
Peng, Z., Li, Q., Hui, K.~M., Liu, C., and Zhou, B. (2021).
\newblock Learning to simulate self-driven particles system with coordinated policy optimization.
\newblock {\em Advances in Neural Information Processing Systems}, 34:10784--10797.

\bibitem[Rabinowitz et~al., 2018]{rabinowitz2018machine}
Rabinowitz, N., Perbet, F., Song, F., Zhang, C., Eslami, S.~A., and Botvinick, M. (2018).
\newblock Machine theory of mind.
\newblock In {\em International conference on machine learning}, pages 4218--4227. PMLR.

\bibitem[Rahman et~al., 2023]{rahman2023generating}
Rahman, A., Fosong, E., Carlucho, I., and Albrecht, S.~V. (2023).
\newblock Generating teammates for training robust ad hoc teamwork agents via best-response diversity.
\newblock {\em Transactions on Machine Learning Research}.

\bibitem[Raileanu et~al., 2018]{raileanu2018modeling}
Raileanu, R., Denton, E., Szlam, A., and Fergus, R. (2018).
\newblock Modeling others using oneself in multi-agent reinforcement learning.
\newblock In {\em International conference on machine learning}, pages 4257--4266. PMLR.

\bibitem[Schulman et~al., 2017]{schulman2017proximal}
Schulman, J., Wolski, F., Dhariwal, P., Radford, A., and Klimov, O. (2017).
\newblock Proximal policy optimization algorithms.
\newblock {\em arXiv preprint arXiv:1707.06347}.

\bibitem[Schwarting et~al., 2019]{schwarting2019social}
Schwarting, W., Pierson, A., Alonso-Mora, J., Karaman, S., and Rus, D. (2019).
\newblock Social behavior for autonomous vehicles.
\newblock {\em Proceedings of the National Academy of Sciences}, 116(50):24972--24978.

\bibitem[Silver et~al., 2018]{silver2018general}
Silver, D., Hubert, T., Schrittwieser, J., Antonoglou, I., Lai, M., Guez, A., Lanctot, M., Sifre, L., Kumaran, D., Graepel, T., et~al. (2018).
\newblock A general reinforcement learning algorithm that masters chess, shogi, and go through self-play.
\newblock {\em Science}, 362(6419):1140--1144.

\bibitem[Song et~al., 2019]{song2019maml}
Song, X., Gao, W., Yang, Y., Choromanski, K., Pacchiano, A., and Tang, Y. (2019).
\newblock Es-maml: Simple hessian-free meta learning.
\newblock {\em arXiv preprint arXiv:1910.01215}.

\bibitem[Strouse et~al., 2021]{strouse2021collaborating}
Strouse, D., McKee, K., Botvinick, M., Hughes, E., and Everett, R. (2021).
\newblock Collaborating with humans without human data.
\newblock {\em Advances in Neural Information Processing Systems}, 34:14502--14515.

\bibitem[Tack et~al., 2022]{tack2022meta}
Tack, J., Park, J., Lee, H., Lee, J., and Shin, J. (2022).
\newblock Meta-learning with self-improving momentum target.
\newblock {\em Advances in Neural Information Processing Systems}, 35:6318--6332.

\bibitem[Triandis, 2018]{triandis2018individualism}
Triandis, H.~C. (2018).
\newblock {\em Individualism and collectivism}.
\newblock Routledge.

\bibitem[Vinyals et~al., 2019]{vinyals2019grandmaster}
Vinyals, O., Babuschkin, I., Czarnecki, W.~M., Mathieu, M., Dudzik, A., Chung, J., Choi, D.~H., Powell, R., Ewalds, T., Georgiev, P., et~al. (2019).
\newblock Grandmaster level in starcraft ii using multi-agent reinforcement learning.
\newblock {\em Nature}, 575(7782):350--354.

\bibitem[Yu et~al., 2023]{yu2023learning}
Yu, C., Gao, J., Liu, W., Xu, B., Tang, H., Yang, J., Wang, Y., and Wu, Y. (2023).
\newblock Learning zero-shot cooperation with humans, assuming humans are biased.
\newblock {\em arXiv preprint arXiv:2302.01605}.

\bibitem[Yu et~al., 2020]{yu2020meta}
Yu, T., Quillen, D., He, Z., Julian, R., Hausman, K., Finn, C., and Levine, S. (2020).
\newblock Meta-world: A benchmark and evaluation for multi-task and meta reinforcement learning.
\newblock In {\em Conference on robot learning}, pages 1094--1100. PMLR.

\bibitem[Zhao et~al., 2023]{zhao2023maximum}
Zhao, R., Song, J., Yuan, Y., Hu, H., Gao, Y., Wu, Y., Sun, Z., and Yang, W. (2023).
\newblock Maximum entropy population-based training for zero-shot human-ai coordination.
\newblock In {\em Proceedings of the AAAI Conference on Artificial Intelligence}, volume~37, pages 6145--6153.

\end{thebibliography}
} 


\appendix
\section{Theorem Proof}
\renewcommand{\thetheorem}{4.\arabic{theorem}}
\setcounter{theorem}{0}
\label{app1}
\begin{theorem}
    For a finite MDP with $T$ time steps and a specific role policy $\pi(z)$, if any random policy $\pi^\prime$ is $\epsilon$-close to the role policy $\pi(z^\prime)$, then we have
    \begin{equation}
        \left | \frac{J(\pi(z), \pi^\prime)}{J(\pi(z), \pi(z^\prime))} - 1 \right | \leq \epsilon T.
    \end{equation}
\end{theorem}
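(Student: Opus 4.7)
The plan is to relate the two expected returns through a trajectory-level importance ratio and to exploit the per-step $\epsilon$-closeness of $\pi'$ to $\pi(z')$. Because the learning policy $\pi(z)$ and the transition kernel $\bm{\mathcal{P}}$ are shared across both expressions, the only difference between the two induced trajectory distributions lies in the per-step factors contributed by the other agent's action probabilities. I would therefore rewrite
\begin{equation*}
    J(\pi(z),\pi') \;=\; \mathop{\mathbb{E}}_{\tau \sim P_{\pi(z),\pi(z')}}\!\left[\, \rho(\tau) \sum_{t=0}^{T-1} \gamma^{t}\, r_t \,\right], \qquad \rho(\tau) \;=\; \prod_{t=0}^{T-1} \frac{\pi'(a_t'|o_t')}{\pi(a_t'|o_t',z')},
\end{equation*}
so that comparing with $J(\pi(z),\pi(z'))$ reduces to analysing $\rho(\tau) - 1$ inside the expectation.

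First, I would invoke Definition 4.1 to write each per-step factor as $1+\delta_t$ with $|\delta_t| < \epsilon$, giving $\rho(\tau) = \prod_{t=0}^{T-1}(1+\delta_t)$. Expanding this product and collecting terms by order in $\epsilon$ yields $\rho(\tau) - 1 = \sum_{t} \delta_t + (\text{higher-order cross terms})$, whose absolute value is upper-bounded by $(1+\epsilon)^{T}-1$ and, to leading order, by $\epsilon T$.

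Second, I would form the ratio and write
\begin{equation*}
    \frac{J(\pi(z),\pi')}{J(\pi(z),\pi(z'))} - 1 \;=\; \frac{\mathop{\mathbb{E}}\!\left[(\rho(\tau)-1)\, R(\tau)\right]}{\mathop{\mathbb{E}}\!\left[R(\tau)\right]},
\end{equation*}
with $R(\tau) = \sum_t \gamma^t r_t$, and pull the pointwise bound $|\rho(\tau)-1| \leq \epsilon T$ out of the expectation. Care is needed with the sign of $R(\tau)$; I would handle this either by assuming the mapped rewards $\psi(\cdot, z)$ remain non-negative in the regime of interest or, more generally, by bounding the numerator in absolute value using $|\rho(\tau)-1| \leq \epsilon T$ pointwise before dividing.

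The main obstacle I anticipate is the mismatch between the exact pointwise product bound $(1+\epsilon)^{T}-1$ and the claimed linear bound $\epsilon T$. These agree only to first order in $\epsilon$, so one either restricts to the regime where $\epsilon T$ is small, or replaces the one-shot importance-ratio argument with a telescoping hybrid that swaps $\pi'$ for $\pi(z')$ at a single time step at a time. In the telescoping version, each individual swap contributes a multiplicative perturbation of at most $\epsilon$ to the remaining expected return, and summing across $T$ steps produces the additive bound $\epsilon T$ directly, avoiding the cross-term blow-up that arises from expanding the full product globally. I would adopt the telescoping route in a polished write-up, as it delivers the stated $\epsilon T$ bound exactly and makes the dependence on $T$ transparent.
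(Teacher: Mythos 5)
Your proposal follows essentially the same route as the paper: write both returns as expectations over trajectories, factor out the per-step importance ratio $\prod_t \pi'(a_t'|o_t')/\pi(a_t'|o_t',z')$, and bound its deviation from $1$ via the per-step $\epsilon$-closeness. You are in fact more careful than the paper, which papers over two real gaps with ``$\approx$'' signs that you correctly identify: the exact pointwise product bound is $(1+\epsilon)^T-1$ rather than $\epsilon T$, and converting a bound on the trajectory-probability ratio into a bound on the ratio of returns requires the mapped returns $\psi(\bm{\mathcal{R}}(\tau),z)$ to have a fixed sign (otherwise $J(\pi(z),\pi(z'))$ can be small or of indefinite sign and the ratio bound fails). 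One caveat on your proposed repair: the telescoping hybrid does not actually escape the exponential factor, since each single-step swap perturbs the return of the \emph{previous} hybrid by a factor in $[1-\epsilon,1+\epsilon]$, and summing the increments still accumulates to $(1+\epsilon)^T-1$; the linear bound $\epsilon T$ therefore holds only to first order in $\epsilon$ under either route, exactly as in the paper's own argument.
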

    
\begin{proof}
    We start by examining the trajectory probabilities under policies $\pi(z)$ and $\pi^\prime$:
    \begin{equation}
        p_{\pi(z), \pi^\prime}(\tau) = p(s_0) \prod_{t}^T \pi(a_t|o_t, z) \pi^\prime(a_t^\prime|o_t^\prime) p(s_{t+1}|s_t, a_t, a_t^\prime).
    \end{equation}
    Given the assumption  
    \begin{equation}
        \left | \frac{\pi^\prime(a_t^\prime|o_t^\prime)}{\pi(a_t^\prime|o_t^\prime, z)} - 1 \right | < \epsilon,
    \end{equation}
    we can approximate the ratio of trajectory probabilities under $\pi^\prime$ and $\pi(z^\prime)$:
    \begin{equation}
        \left | \frac{p_{\pi(z), \pi^\prime}(\tau)}{p_{\pi(z), \pi(z^\prime)}(\tau)} - 1 \right | \approx \left | \prod_{t=0}^T \frac{\pi^\prime(a_t^\prime|o_t^\prime)}{\pi(a_t^\prime|o_t^\prime, z)} - 1 \right | \approx \left | \left (\sum_{t=0}^T \frac{\pi^\prime(a_t^\prime|o_t^\prime)}{\pi(a_t^\prime|o_t^\prime, z)} - 1 \right ) \right |\leq \epsilon T.
    \end{equation}
    And the expected reward can then be written as:
    \begin{equation}
        J(\pi(z), \pi^\prime) = \sum_{\tau} p_{\pi(z), \pi^\prime}(\tau) \psi\left(\bm{\mathcal{R}}(\tau), z \right),
    \end{equation}
    where 
    \begin{equation}
        \bm{\mathcal{R}}(\tau) = \sum_{t}^T \bm{\mathcal{R}}(s_t, a_t, a_t^\prime),
    \end{equation}
    and $\psi$ is a predefined mapping function. Therefore, we have
    \begin{equation}
        \left | \frac{J(\pi(z), \pi^\prime)}{J(\pi(z), \pi(z^\prime))} - 1 \right | \leq \epsilon T.
    \end{equation}
\end{proof}

\section{Experimental Details}
\label{app2}
The baselines were established using the implementation details provided in the repositories from \citet{rahman2023generating} (AnyPlay, TrajeDi, BRDiv)\footnote{\url{https://github.com/uoe-agents/BRDiv}} and \citet{yu2023learning} (HSP)\footnote{\url{https://github.com/samjia2000/HSP}}. The network architecture for these baselines is identical to that of the RP framework, which is detailed below. The additional role predictor $q_\phi$ for RP is a one-layer fully connected neural network with 64 hidden units and ReLU activation function. We used the OpenAI gym implementation of Overcooked in HSP \citep{yu2023learning} and adapted it for other algorithms. The Harvest and CleanUp games were implemented using the MeltingPot framework \citep{leibo2021meltingpot}. Detailed experimental settings for Overcooked, Harvest, and CleanUp are provided in the following subsections. For more details, please refer to our code repository.\footnote{\url{https://github.com/Weifan408/role_play}}

\subsection{Experimental Details of Overcooked}
\label{app2_1}
The reward shaping function in Overcooked is defined as:
\begin{equation}
    \psi(r^i, z^i) = r^i + \sum_{k} (z^i_k * E_k),
\end{equation}
where $z_k^i \in [-1,0,1]$ is the preference of agent $i$ for event $k$, and $E_k$ is the reward associated with event $k$. The reward shaping function is designed to encourage agents to exhibit specific behaviors based on their role preferences. Table \ref{tab_2} shows the event reward we set in Overcooked. HSP uses the same reward shaping function as RP, while AnyPlay, TrajeDi, and BRDiv use the original collective reward directly from the environment.

\begin{table}[h]
    \centering
    \caption{Event reward in Overcooked.}
    \label{tab_2}
    \begin{tabular}{cc}
        \toprule
        Event & Reward \\
        \midrule
        Picking up an item from any dispenser (onion, tomato, dish) & 5 \\
        Picking up a soup & 5 \\
        Viable placement & 5 \\
        Optimal placement & 5 \\
        Catastrophic placement & 10 \\
        Placing an item into the pot (onion, tomato) & 3 \\
        Delivery & 10 \\
        \bottomrule
    \end{tabular}
\end{table}

Table \ref{tab_3} shows the detailed model architecture used in Overcooked. The policy and value networks are trained using the Proximal Policy Optimization (PPO) algorithm \citep{schulman2017proximal}.

\begin{table}[h]
    \centering
    \caption{Base model architecture used in Overcooked.}
    \label{tab_3}
    \begin{tabular}{cc}
        \toprule
        Layer & Architecture \\
        \midrule
        CNN Filter &  [[32,3,1],[64,3,1],[32,3,1]] \\
        CNN Activation & ReLU \\
        LSTM Cell Size & 128 \\
        LSTM Activation & SiLU \\
        Post Fcnet Hiddens & [64,64] \\
        Post Fcnet Activation & ReLU \\
        \bottomrule
    \end{tabular}
\end{table}

\subsection{Experimental Details in Harvest and CleanUp}
\label{app2_2}
We employ ring formulation of SVO to categorize agent roles. We introduced SVO-based (role-based) rewards as intrinsic motivators for agents to learn role-specific strategies. The reward feature mapping function $\psi$ is defined as:
\begin{equation}
    \psi(\bm{r}, z^i) = w * r^i + (1-w) * \underbrace{\left| \cos(z^i) r^i + \sin(z^i) \bar{r}^{-i} \right|}_{\text{SVO-based reward shaping}},
\end{equation}
where $r^i$ denotes the individual reward for agent $i$, $\bar{r}^{-i}$ the average reward of other agents, and $w$ a hyperparameter adjusting the impact of role-based rewards. In implementation, we set $w=0.3$, and the role embedding $z^i$ is one-hot encoded. Table \ref{tab_4} shows the hyperparameters used in Harvest and CleanUp.
\begin{table}[h]
    \centering
    \caption{Hyperparameters in Harvest and CleanUp.}
    \label{tab_4}
    \begin{tabular}{cc}
        \toprule
        Hyperparameters & Values \\
        \midrule
        FCnet Hiddens & [256,256] \\
        FCnet Activation & Tanh \\
        LSTM Cell Size & 256 \\
        LSTM Activation & SiLU \\
        w & 0.3 \\
        Trail Length & 10 \\
        \bottomrule
    \end{tabular}
\end{table}

We pre-trained three distinct policies using a naive SP framework and PPO for zero-shot evaluation: a selfish agent, a prosocial agent, and an inequity-averse agent \citep{hughes2018inequity}. Each agent is trained with a different reward function tailored to its role:
\begin{itemize}
    \item \textbf{Selfish Agent}: This agent is trained with the original reward directly from the environment.
    \item \textbf{Prosocial Agent}: This agent is trained using a collective reward that sums the rewards of all agents, formalized as $r_i = \sum_j r_j$.
    \item \textbf{Inequity-Averse Agent}: This agent's training includes an inequity shaping reward designed to address social fairness, defined by the formula: 
    \begin{equation*}
        r_i = r_i - \frac{\alpha\sum_{j\neq i}\max(0, r_j-r_i)+ \beta \sum_{j\neq i}\max(0, r_i-r_j)}{n-1},
    \end{equation*}
    where $\alpha\sum_{j\neq i}\max(0, r_j-r_i)$ and $\beta \sum_{j\neq i}\max(0, r_i-r_j)$ are the penalty for disadvantageous inequity and the reward for advantageous inequity, respectively. For this agent, $\alpha=5$ and $\beta=0.05$ are used,following the default values set in \citep{hughes2018inequity}.
\end{itemize}
Additionally, we modified the \textit{attack} action for both the selfish and inequity-averse agents to \textit{stay} in the CleanUp game to prevent them from becoming non-responsive (\textit{wooden man}) after training.

\end{document}